\documentclass[conference,letterpaper]{IEEEtran}
\usepackage[utf8]{inputenc} 
\usepackage[numbers,compress]{natbib}
\usepackage[cmex10]{amsmath}
\usepackage{amsfonts,mathrsfs,mathdots,mathtools,amsthm,amssymb,centernot}
\usepackage[bb=ams]{mathalfa}
\usepackage{algorithmic}
\usepackage{graphicx}
\usepackage{textcomp}
\usepackage[dvipsnames]{xcolor}
\usepackage{tikz}
\usepackage{pgfplots}
\usepackage{subcaption}
\usepackage{dsfont}
\usepackage{pifont}
\usepackage{siunitx}
\usepackage{comment}
\usepackage{pgfplotstable}
\usepackage[ruled]{algorithm2e}
\usepackage{color,soul}
\usepackage{bm}	
\usepackage[inline]{enumitem}
\setlist[enumerate,1]{label = \arabic*.,ref = \arabic*}
\usepackage{nicefrac}
\usepackage{subcaption}
\usepackage{colortbl}
\usepackage{booktabs}
\usepackage{diagbox}
\usepackage{url}
\usepackage{ifthen}
\usepackage{balance}
\usepackage{xfrac}
\usepackage{dirtytalk}
\usepackage{bbm}

\usepackage{hyperref}  
\hypersetup{colorlinks,
            linkcolor=blue,
            citecolor=blue,
            urlcolor=blue,
            anchorcolor=blue,
            filecolor=blue,
            linktocpage,
            plainpages=false,
            breaklinks=true}

\addtolength{\topmargin}{9mm}

\interdisplaylinepenalty=2500 


%
%

\def\supp{\textnormal{supp}}

\theoremstyle{definition}
\newtheorem{definition}{Definition}
\newtheorem{lemma}{Lemma}
\newtheorem{theorem}{Theorem}

\newtheorem{corollary}{Corollary}
\newtheorem{proposition}{Proposition}
\newtheorem{example}{Example}
\newtheorem{remark}{Remark}







\mathtoolsset{showonlyrefs}
\allowdisplaybreaks
\IEEEoverridecommandlockouts
\begin{document}
\title{Dobrushin Coefficients of Private Mechanisms Beyond Local Differential Privacy} 

 \author{%
   \IEEEauthorblockN{Leonhard Grosse\IEEEauthorrefmark{1},
                     Sara Saeidian\IEEEauthorrefmark{1}\IEEEauthorrefmark{2},
                     Tobias J. Oechtering\IEEEauthorrefmark{1},
                      and Mikael Skoglund\IEEEauthorrefmark{1}%
   \IEEEauthorblockA{\IEEEauthorrefmark{1}%
                    KTH Royal Institute of Technology, Stockholm, Sweden,
                     \{lgrosse, saeidian, oech, skoglund\}@kth.se}
       \IEEEauthorblockA{\IEEEauthorrefmark{2}%
                    Inria Saclay, Palaiseau, France}}

\thanks{This work was supported by the Swedish Research Council (VR) under grants 2023-04787 and 2024-06615.} 
}

\maketitle

\begin{abstract}
 We investigate Dobrushin coefficients of discrete Markov kernels that have bounded pointwise maximal leakage (PML) with respect to all distributions with a minimum probability mass bounded away from zero by a constant $c>0$. This definition recovers local differential privacy (LDP) for $c\to 0$. We derive achievable bounds on contraction in terms of a kernel's PML guarantees, and provide mechanism constructions that achieve the presented bounds. Further, we extend the results to general $f$-divergences by an application of Binette's inequality. Our analysis yields tighter bounds for mechanisms satisfying LDP and extends beyond the LDP regime to \emph{any} discrete kernel.  
\end{abstract}


\section{Introduction}
In the local model of privacy, individual data points, modeled as realizations of a random variable $X$, are privatized by randomization through a Markov kernel (usually called the \emph{privacy mechanism}) $K: (\mathcal X,\mathcal Y) \to [0,1]$ which outputs the privatized random variable $Y$. Most commonly, the local privacy guarantee of a kernel is quantified by \emph{local differential privacy} (LDP) \cite{duchi2013LDPminmaxDEF,DPoriginalpaper}, a functional of the kernel defined by 
\begin{equation}
\label{eq:LDPdef}
    \text{LDP}(K) = \sup_{y\in\mathcal Y} \sup_{x,x' \in\mathcal X} \log\frac{K_{Y|X=x}(y)}{K_{Y|X=x'}(y)}.
\end{equation}
Whenever $\text{LDP}(K)\leq \varepsilon$, the mechanism is said to satisfy \emph{$\varepsilon$-LDP}.
While methods designed for LDP are increasingly used in practice (see, e.g., \cite{googleRAPPOR,appleDP, abowd2018us}), it has been observed that its definition may be overly restrictive for some statistical applications, leading to significant degradation in estimation performance. For example, \citet{duchi2013local} argue that in parameter estimation problems, the increase in risk introduced by LDP guarantees can be extreme when the privacy parameter is picked in a way that provides stringent guarantees. Subsequent related works \cite{asoodeh2024contraction,9517999,10206578} identify mechanisms' \emph{contraction coefficients} as important tools for determining such risk increase and related quantities introduced by privatization. Heuristically, the contraction coefficient $\eta_f(K)$ of a Markov kernel $K$ with respect to some $f$-divergence $D_f$ quantifies \say{how much closer} two distributions will become in $D_f$ after being passed through the kernel. More formally, the contraction coefficient of a kernel is defined as 
\begin{equation}
    \eta_f(K) = \sup_{P_X\neq Q_X} \frac{D_f(K\circ P_X||K\circ Q_X)}{D_f(P_X||Q_X)}.
\end{equation}
The coefficients can then be applied to find private minimax risk bounds by reduction to a hypothesis testing problem. For example, Le\,Cam's two point method \cite[Thm.~31.1]{Polyanskiy_Wu_2025} allows us to bound the minimax risk $R^\star$ in estimating a parameter $\theta \in \Theta$ of the random variable $X$ as
\begin{equation}
    \label{eq:LeCam}
     R^{\star} \coloneqq \inf_{\hat \theta}\sup_{\theta\in\Theta}\mathbb E[l(\hat \theta,\theta)]\geq \frac{l(\theta_0,\theta_1)}{2}\Big[1-\text{TV}(P_{\theta_0}||P_{\theta_1})\Big],
\end{equation}
for any two $\theta_0,\theta_1 \in \Theta$ and any loss function $l(\cdot,\cdot)$ satisfying a triangle inequality, where $P_\theta$ denotes the distribution of $X$ given $\theta$. If $X$ is further privatized by some Markov kernel, contraction coefficients along with convenient tensorization properties of $f$-divergences can be used to quantify the degradation in estimation performance due to privatization.
The study of contraction coefficients has long been of interest in the information-theory community, and many works on fundamental properties exists. For a detailed review of results, see e.g., \cite[Chapter 33]{Polyanskiy_Wu_2025}. One of the most fundamental results in this domain states that \emph{any} $f$-divergence contraction coefficient is upper bounded by the contraction coefficient of the total variation distance, an $f$-divergence with $f(t)=0.5|t-1|$. In the study of Markov chains, this quantity is commonly referred to as the \emph{Dobrushin coefficient} $\eta_\text{TV}$ \cite{dobrushin1956central}. 

In this work, we study Dobrushin coefficients of locally private mechanisms whenever LDP as a privacy measure turns out to be too strict. We are motivated to do so by a combination of observations: Firstly, as mentioned above, there exists situations in which effective privacy guarantees with LDP (that is, guarantees with a small enough privacy parameter) can only be made by accepting large increases in estimation risk. Secondly, the definition of LDP in \eqref{eq:LDPdef} demonstrates a somewhat converse issue: Whenever a (discrete) Markov kernel contains zero elements (whenever $K_{Y|X=x}(y)=0$ for some $y$), the LDP measure is infinite. However, this behavior is somewhat counter-intuitive in the broader perspective. To illustrate, consider the two Markov kernels,
\begin{equation}
   K_1= \begin{bmatrix}
        \frac{1}{3} & \frac{1}{3} & \frac{1}{3} & 0 \\
        0 & \frac{1}{3} & \frac{1}{3} & \frac{1}{3} \\
        \frac{1}{3} & 0 & \frac{1}{3} & \frac{1}{3} \\
        \frac{1}{3} & \frac{1}{3} & 0 & \frac{1}{3}
    \end{bmatrix},
    \quad K_2 = \begin{bmatrix}
        1 & 0 & 0 & 0 \\
        0 & 1 & 0 & 0 \\
        0 & 0 & 1 & 0 \\
        0 & 0 & 0 & 1
    \end{bmatrix}.
\end{equation}
Clearly, $K_2$ offers no privacy compared to simply releasing the realization of the secret $X$ directly, while $K_1$ significantly randomizes any input realization of $X$. Still, both $K_1$ and $K_2$ are equivalently \say{non-private} in the framework of LDP. 

In order to quantify local privacy in cases such as the above example, we instead employ \emph{pointwise maximal leakage} (PML) \cite{saeidian2023pointwise}, a recently proposed pointwise adaption of maximal leakage \cite{IssaMaxL,alvim2012measuring} in the framework of quantitative information flow \cite{alvim2020science}.
PML is built on concrete threat-models, and is therefore able to offer \emph{meaningful} privacy guarantees in terms of adversarial inference risks \cite{saeidian2023inferential}. Further, PML can be seen as a generalization of LDP. In particular, the privacy quantification of PML depends both on the mechanism $K$, as well as on the distribution $P_X$ of the private random variable $X$. The generalization follows since, as shown in \cite{IssaMaxL},
\begin{equation}
    \text{LDP}(K) = \sup_{P_X}\,\text{PML}(K,P_X),
\end{equation}
that is, guaranteeing a certain level of LDP is equivalent to guaranteeing PML for \emph{any} data-generating distribution. 

\subsection{Contributions}
In this paper, we present achievable bounds on the Dobrushin coefficients of mechanisms satisfying PML for any data-generating distribution $P_X$ with minimum probability mass bounded away from zero by some constant $c>0$, that is, for which $\min_x P_X(x)\geq c$. Further, we use the result to obtain strong data processing inequalities for general $f$-divergences. All omitted proofs can be found in the appendix.

\section{Preliminaries}
We use uppercase letters to denote random variables, lowercase letters to denote their realizations and caligraphic letters to denote sets. Specifically, we consider the (discrete) random variables $X$ and $Y$, where $X$ represents input data into a Markov kernel $K$ (a stochastic matrix also referred to as the \emph{(privacy) mechanism} below) that induces a random variable $Y$ at its output. Let $P_{XY}$ denote the joint distribution of $X$ and $Y$. Then we use $P_{XY} = K \times P_X$ to imply that $P_{XY}(x,y) = K_{Y|X=x}(y)P_X(x)$ and $P_{Y} = K \circ P_X$ to denote the marginalization $P_Y(y) = \sum_{x \in \mathcal X} K_{Y|X=x}(y)P_X(x)$. We write $\mathcal P(\mathcal X)$ for the probability simplex on $\mathcal X$. For $N\in\mathbb N$, we define $[N] \coloneqq \{1,\dots,N\}$ as the set of positive integers up to $N$. Finally, we use $\mathcal S_{N,M}$ to denote the set of $N\times M$ row-stochastic matrices, and $\log$ denotes the natural logarithm.

\subsection{$f$-Divergences and Contraction Coefficients}
For any convex function $f: (0,\infty) \to \mathbb R$ such that $f(1)=0$, we define the \emph{$f$-divergence} between two probability measures $P \ll Q$ \cite{csiszar1967information} as
\begin{equation}
    D_f(P||Q) = \mathbb E_Q\bigg[f\bigg(\frac{dP}{dQ}\bigg)\bigg].
\end{equation}
We write $D_f(\cdot ||\cdot)=\text{TV}(\cdot||\cdot)$ whenever $f(t) = 0.5|1-t|$, that is, if the $f$-Divergence of interest is the total variation distance. Other notable examples for specific choices of $f$ include relative entropy ($f(t) = t\log(t)$), in which case we write $D(\cdot||\cdot)$, and the squared Hellinger divergence ($f(t) = (1-\sqrt{t})^2$), which we denote by $H^2(\cdot||\cdot)$. For any $f$-divergence according to the above definition, the \emph{data-processing inequality} (DPI) states that $D_f(K\circ P_X||K\circ Q_X)\leq D_f(P_X||Q_X)$ for any Markov kernel $K$ and any two distributions $P_X$, $Q_X$. In order to further quantify the gap between the two sides of the DPI, consider the \emph{contraction coefficient} of a kernel $K$,
\begin{equation}
    \eta_{f}^{\mathcal P}(K) = \sup_{P_X,Q_X \in \mathcal P} \frac{D_f(K\circ P_X||K\circ Q_X)}{D_f(P_X||Q_X)},
\end{equation}
where $\mathcal P$ is some arbitrary subset of the probability simplex $\mathcal P\subseteq \mathcal P(\mathcal X)$. Whenever $\mathcal P = \mathcal P(\mathcal X)$, we drop the additional superscript and simply write $\eta_{f}(K)$. Naturally, the DPI already implies $\eta_f^{\mathcal P}(K) \leq 1$ always. We are hence mainly concerned with identifying cases in which $\eta_f^{\mathcal P}(K)<1$. In the majority of what follows, we will consider the contraction coefficient of a kernel with respect to the total variation distance, which due to \citet{dobrushin1956central} is expressed in computable form as,
\begin{equation}
\label{eq:Drobushin}
    \eta_\text{TV}(K) = \sup_{x\neq x'}\, \text{TV}(K_{Y|X=x}||K_{Y|X=x'}).
\end{equation}
It is known \cite{Polyanskiy_Wu_2025} that for any kernel $K$, $\eta_{\chi^2}(K)\leq \eta_f(K) \leq \eta_\text{TV}(K)$, where $\eta_{\chi^2}(K)$ denotes the case $f(t)=(t-1)^2$.

\subsection{Pointwise Maximal Leakage}
Pointwise maximal leakage (PML) \cite{saeidian2023pointwise} is a notion of privacy related to maximal leakage \cite{IssaMaxL,alvim2012measuring}, addressing a key limitation of the on-average maximal leakage measures: rare but revealing outputs of a mechanism can lead to severe underestimation of inference risk. PML quantifies leakage of individual outcomes $Y=y$, and can therefore capture worst-case posterior risks that otherwise may be hidden by averaging.

PML admits equivalent operational interpretations based on adversarial threat models. In particular, it can be characterized via a general gain-function framework in which an adversary seeks to maximize the expected value of non-negative gain functions after observing an outcome of the mechanism \cite{alvim2012measuring}.
\begin{definition}[Gain function view of PML {{\cite[Cor. 1]{saeidian2023pointwise}}}]
\label{def:PMLgainfunc}
    Let $X$ be a random variable defined on the set $\mathcal X$ distributed according to $P_X$. Let $(X,Y)$ be induced by $P_X$ togther with the mechanism $P_{Y|X}$. Then the \emph{pointwise maximal leakage in the gain function view from $X$ to an outcome $y$} is defined as 
    \begin{equation}
        \ell_{P_{XY}}(X\to y) \coloneqq \log \, \sup_g \frac{\sup_{P_{W|Y}}\mathbb E[g(X,W)\mid Y=y]}{\max_{w\in\mathcal W}\mathbb E[g(X,w)]},
    \end{equation}
    where $\mathcal W$ is some arbitrary set of \say{guesses}, and $g: \mathcal X \times \mathcal W \to \mathbb R_+$ is an arbitrary non-negative gain function.
\end{definition}
Definition \ref{def:PMLgainfunc} can be understood in the following way: Assume an adversary has access to the output $y$ of a mechanism $P_{Y|X}$. Using this outcome, she forms a guess $\hat w$ of an arbitrary property $f(X=x)=w$ of the realization $X=x$. She quantifies the quality of that guess by some gain function $g(x,w)$. The privacy cost of releasing $y$ through $P_{Y|X}$ is quantified by comparing the gain from an optimal guess made \emph{with} access to $y$, to the gain of the best blind guess \emph{without} access to the mechanism's output $y$. The PML of the mechanism is then obtained by taking the maximum privacy cost for \emph{any} (non-negative) gain function $g$, and any guessing space $\mathcal W$.
The PML framework includes a broad class of attacks, including reconstruction and membership inference attacks (see \cite{saeidian2023pointwise} for corresponding gain functions). By quantifying the inference risk for the \emph{worst-case} gain function, PML is a robust and meaningful measure grounded in adversarial success rates.\footnote{As shown in \cite{saeidian2023pointwise}, the same quantity also arises from a pointwise adaption of the randomized function model in \cite{IssaMaxL}.} For a mechanism $K$ and data distribution $P_X$ (with full support), the PML to an outcome $y$ is shown in \cite{saeidian2023pointwise} to simplify to,
\begin{equation}
    \ell_{K\times P_X}(X\!\to\! y)=\log\frac{\max_{x\in\mathcal X}K_{Y|X=x}(y)}{(K\circ P_X)(y)}.
\end{equation}
The following result relates LDP and PML.
\begin{proposition}[{{\cite[Theorem 14]{IssaMaxL}}}]
\label{prop:LDPisPMLc0}
    The LDP guarantee of any Markov kernel $K$ mapping from $\mathcal X$ to $\mathcal Y$ can be expressed as 
    \begin{equation}
    \label{eq:LDPisPMLforallPriors}
        \text{LDP}(K) = \sup_{y\in\mathcal Y}\sup_{P_X\in\mathcal P(\mathcal X)} \ell_{K\times P_X}(X\to y).
    \end{equation}
\end{proposition}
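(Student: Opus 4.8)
The plan is to reduce the claimed identity to an elementary linear-optimization fact applied outcome by outcome, and then take a supremum over outcomes. Fix $y\in\mathcal Y$; without loss of generality $\max_{x\in\mathcal X}K_{Y|X=x}(y)>0$, since any outcome with $K_{Y|X=x}(y)=0$ for every $x$ is never produced by the mechanism and contributes to neither side of \eqref{eq:LDPisPMLforallPriors}. For this $y$, the inner supremum in the definition \eqref{eq:LDPdef} is just $\sup_{x,x'}\log\frac{K_{Y|X=x}(y)}{K_{Y|X=x'}(y)}=\log\frac{\max_x K_{Y|X=x}(y)}{\min_{x'}K_{Y|X=x'}(y)}$ (read as $+\infty$ when the denominator vanishes), since numerator and denominator are optimized over independent variables. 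So it suffices to show that $\sup_{P_X}\ell_{K\times P_X}(X\to y)$ equals this same quantity.

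For the upper bound, for any $P_X$ with support $\mathcal A\subseteq\mathcal X$ one has $\ell_{K\times P_X}(X\to y)=\log\frac{\max_{x\in\mathcal A}K_{Y|X=x}(y)}{\sum_{x}P_X(x)K_{Y|X=x}(y)}$; the numerator is at most $\max_{x\in\mathcal X}K_{Y|X=x}(y)$, while the denominator, being a convex combination of the values $K_{Y|X=x}(y)$, is at least $\min_{x\in\mathcal X}K_{Y|X=x}(y)$, so $\ell_{K\times P_X}(X\to y)\le\log\frac{\max_x K_{Y|X=x}(y)}{\min_x K_{Y|X=x}(y)}$. For the matching lower bound, I would note that $P_X\mapsto (K\circ P_X)(y)=\sum_x P_X(x)K_{Y|X=x}(y)$ is linear on the simplex and hence minimized at the vertex $\delta_{x^\star}$ with $x^\star\in\argmin_x K_{Y|X=x}(y)$; since PML is formulated for full-support priors, take instead the sequence $P_X^{(n)}$ that assigns mass $1-\tfrac{|\mathcal X|-1}{n}$ to $x^\star$ and $\tfrac1n$ to each other symbol. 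Then $(K\circ P_X^{(n)})(y)\to\min_x K_{Y|X=x}(y)$ while the numerator of $\ell_{K\times P_X^{(n)}}(X\to y)$ stays equal to $\max_x K_{Y|X=x}(y)$, giving $\sup_{P_X}\ell_{K\times P_X}(X\to y)\ge\log\frac{\max_x K_{Y|X=x}(y)}{\min_x K_{Y|X=x}(y)}$; in the degenerate case $\min_x K_{Y|X=x}(y)=0$ we have $(K\circ P_X^{(n)})(y)\to0$ and both sides are $+\infty$.

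Combining the two bounds yields the per-outcome identity, and taking $\sup_{y\in\mathcal Y}$ on both sides gives \eqref{eq:LDPisPMLforallPriors}. The only genuinely delicate point is the achievability step: because the simplified PML expression is posed for full-support distributions, the minimizing prior $\delta_{x^\star}$ is not itself admissible, so one must argue through a limiting sequence of full-support priors (and separately track the degenerate case $\min_x K_{Y|X=x}(y)=0$, where both quantities are infinite). Everything else is routine manipulation of $\max$, $\min$, and convex combinations.
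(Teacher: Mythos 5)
Your proof is correct. Note that the paper itself gives no proof of this proposition---it is imported verbatim as \cite[Theorem 14]{IssaMaxL}---so there is no in-paper argument to compare against; your derivation is a correct, self-contained reconstruction of the standard one. The two-sided argument is sound: the upper bound follows because the numerator of $\ell_{K\times P_X}(X\to y)$ is at most $\max_x K_{Y|X=x}(y)$ while the denominator, a convex combination of the $K_{Y|X=x}(y)$, is at least $\min_x K_{Y|X=x}(y)$; the lower bound follows from your sequence of full-support priors concentrating on a minimizer $x^\star$, which is exactly the right way to handle the fact that the simplified PML formula is stated for full-support distributions (and the vertex $\delta_{x^\star}$ itself would change the numerator to a max over the support). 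You also correctly track the degenerate case $\min_x K_{Y|X=x}(y)=0$, where both sides are $+\infty$, and the interchange of the two suprema over $y$ and $P_X$ is harmless. The only cosmetic caveat is the convention for outcomes $y$ with $K_{Y|X=x}(y)=0$ for every $x$, which you flag and resolve in the standard way.
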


\subsection{Related Work}
Contraction coefficients of $f$-divergences are investigated in, e.g., \cite{ahlswede1976spreading,polyanskiy2015dissipation,ordentlich2021strong,raginsky2016strong,sason2019data,nishiyama2020relations,jin2024properties}. In the privacy domain, \citet{duchi2013local} show that if $K$ satisfies $\varepsilon$-LDP, then
\begin{equation}
    D(K\circ P_X||K\circ Q_X) \leq \min\{4,e^{2\varepsilon}\}(e^\varepsilon-1)^2 \text{TV}^2(P_X||Q_X).
\end{equation}
The authors further use this relation to reason about increase in minimax estimation risk introduced by a LDP constraint. As a byproduct of their construction of optimal mechanisms, \citet{kairouz2016extremal} show that if kernel $K$ is $\varepsilon$-LDP, then,
\begin{equation}
\label{eq:kairouzbound}
    \eta_\text{TV}(K) \leq \frac{e^\varepsilon-1}{e^\varepsilon+1}.
\end{equation}
Related, \citet{asoodeh2024contraction} show that,
\begin{equation}
\label{eq:AsoodehsSDPI}
    \eta_{\chi^2}(K) = \eta_\text{KL}(K) =\eta_{H^2}(K)\leq \bigg[\frac{e^\varepsilon-1}{e^\varepsilon+1}\bigg]^2.
\end{equation}
Furhter, \citet{10206578} bound the contraction coefficient of the elementary $E_\gamma$-divergence in terms of a kernels LDP guarantee.\footnote{Any $f$-divergence can be expressed as a function of this divergence \cite{cohen1998comparisons}.}
The authors further provide a non-linear strong data processing inequality for the same divergence. Bounds on contraction coefficients are shown to be useful for deriving results on private mixing times \cite{zamanlooy2024mathrm}, minimax estimation risk \cite{asoodeh2024contraction}, privacy amplification \cite{asoodeh2020privacy,grosse2025bounds} and the privacy guarantees of noisy gradient descent methods \cite{asoodeh2024privacy}.

\section{Dobrushin Coefficients with Bounded PML}
\label{sec:etaTV}
We define the following generalization of LDP based on the observation in Proposition \ref{prop:LDPisPMLc0}. Let for $0 < c \leq \nicefrac{1}{|\mathcal X|}$,
\begin{equation}
    \mathcal Q_\mathcal X(c) \coloneqq \{P_X \in\mathcal P(\mathcal X): \min_{x\in\mathcal X}P_X(x)\geq c\}.
\end{equation}
With this, we define $(\varepsilon,c)$-PML by restricting the supremum in \eqref{eq:LDPisPMLforallPriors} from $\mathcal P(\mathcal X)$ to the set $\mathcal Q_\mathcal X(c)$.
\begin{definition}
\label{def:epscPML}
    Let $X$ be a random variable defined on the space $\mathcal X$ with $N\coloneqq |\mathcal X|$. For some set $\mathcal P \subseteq \mathcal P(\mathcal X)$, define the \emph{$\mathcal P$-local leakage capacity} \cite{grosse2025privacy} of a kernel $K$ as
    \begin{equation}
        C(K,\mathcal P) \coloneqq \sup_{P_X \in\mathcal P} \sup_{y\in\mathcal Y}\, \ell_{K\times P_X}(X\to y).
    \end{equation}
    For some $\varepsilon\geq 0$ and $0 < c \leq \nicefrac{1}{N}$, we say that a kernel $K\in\mathcal S_{N,M}$ \emph{satisfies $(\varepsilon,c)$-PML} if, 
    \begin{equation}
        C(K,\mathcal Q_\mathcal X(c)) \leq \varepsilon \stackrel{\text{notation}}{\iff} K\in\mathcal M(\varepsilon,c).
    \end{equation}
\end{definition}
The above definition allows us to smoothly interpolate between an $\varepsilon$-PML for only the uniform distribution ($c=\nicefrac{1}{N})$), and $\varepsilon$-LDP $(c=0)$. Operationally, $(\varepsilon,c)$-PML limits the potential gain of any adversary according to Def.~\ref{def:PMLgainfunc} for any data-generating distribution in the set $\mathcal Q_\mathcal X(c)$. We remark that \emph{any} mechanism (even the identity mapping) satisfies $(-\log(c),c)$-PML. This implies that $(\varepsilon,c)$-PML can quantify privacy leakage for any mechanism, in particular also those not satisfying any finite LDP guarantee, like discrete mechanisms with zero probability assignments.
The following statement relates a kernel's zero-assignments $K_{Y|X=x}(y)=0$ to its $(\varepsilon,c)$-PML guarantee. This statement can be seen as a generalization of \cite[Lemma 2]{10646583}. Further, its proof stresses some of the convenient properties of PML guarantees in terms of disclosure limits, as they are discussed in detail in \cite{saeidian2023inferential}.

\begin{lemma}
\label{lem:privacyregion}
Assume $K\in\mathcal M(\varepsilon,c)$. If $\varepsilon < -\log\big((N-l)c\big)$ for some $l\in [N-1]$, then (the stochastic matrix representing) $K$ can have at most $l-1$ zero elements in each column.   
\end{lemma}

As the main object of interest in what follows, we define the maximal Dobrushin coefficient given an $(\varepsilon,c)$-PML guarantee for any $\varepsilon \geq 0$, $c \in (0,\nicefrac{1}{N}]$ as,
\begin{equation}
\label{eq:etaTVepscdef}
    \eta_\text{TV}(\varepsilon,c) \coloneqq \sup_{K\in\mathcal M(\varepsilon,c)}\eta_\text{TV}^{\mathcal Q_\mathcal X(c)}(K).
\end{equation}
 The following lemma shows that the restriction of input distributions has no influence on the value of $\eta_\text{TV}(\varepsilon,c)$. 

\begin{lemma}
    \label{lem:QcdontchangeTVcontr}
        For any $0< c <\nicefrac{1}{N}$, we have
        \begin{equation}
            \eta_\text{TV}(\varepsilon,c) = \sup_{K \in\mathcal M(\varepsilon,c)} \eta_\text{TV}(K) . 
        \end{equation}
\end{lemma}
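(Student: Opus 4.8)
The plan is to establish the stronger \emph{pointwise} identity $\eta_\text{TV}^{\mathcal Q_\mathcal X(c)}(K) = \eta_\text{TV}(K)$ for \emph{every} kernel $K\in\mathcal S_{N,M}$ (the $(\varepsilon,c)$-PML hypothesis plays no role here), and then take the supremum over $K\in\mathcal M(\varepsilon,c)$ on both sides, invoking the definition \eqref{eq:etaTVepscdef}. One inequality is free: since $\mathcal Q_\mathcal X(c)\subseteq\mathcal P(\mathcal X)$, a supremum over the smaller family is no larger, so $\eta_\text{TV}^{\mathcal Q_\mathcal X(c)}(K)\leq\eta_\text{TV}^{\mathcal P(\mathcal X)}(K)=\eta_\text{TV}(K)$.

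For the reverse direction, fix $K$. If $\eta_\text{TV}(K)=0$, all rows of $K$ coincide, $K\circ P_X=K\circ Q_X$ for every pair, and both sides are $0$; so assume $\eta_\text{TV}(K)>0$. By the Dobrushin characterization \eqref{eq:Drobushin} and finiteness of $\mathcal X$, choose $x_1\neq x_2$ attaining $\text{TV}(K_{Y|X=x_1}\|K_{Y|X=x_2})=\eta_\text{TV}(K)$. Define $P_X\in\mathcal P(\mathcal X)$ by $P_X(x_1)=1-(N-1)c$ and $P_X(x)=c$ for $x\neq x_1$, and define $Q_X$ identically with the roles of $x_1$ and $x_2$ exchanged. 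Because $c\leq\nicefrac1N$ we have $1-(N-1)c\geq c$, hence $P_X,Q_X\in\mathcal Q_\mathcal X(c)$, and because $c<\nicefrac1N$ they are distinct.

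The key computation is that $P_X-Q_X$, as a signed measure, takes the value $1-Nc$ at $x_1$, the value $-(1-Nc)$ at $x_2$, and $0$ elsewhere; thus $\text{TV}(P_X\|Q_X)=1-Nc>0$. Since the map $P\mapsto K\circ P$ is linear, $K\circ P_X-K\circ Q_X=(1-Nc)\bigl(K_{Y|X=x_1}-K_{Y|X=x_2}\bigr)$, so $\text{TV}(K\circ P_X\|K\circ Q_X)=(1-Nc)\,\text{TV}(K_{Y|X=x_1}\|K_{Y|X=x_2})$. The common factor $1-Nc$ cancels in the ratio, giving $\eta_\text{TV}^{\mathcal Q_\mathcal X(c)}(K)\geq\text{TV}(K_{Y|X=x_1}\|K_{Y|X=x_2})=\eta_\text{TV}(K)$. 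Combining the two inequalities yields $\eta_\text{TV}^{\mathcal Q_\mathcal X(c)}(K)=\eta_\text{TV}(K)$, and taking $\sup_{K\in\mathcal M(\varepsilon,c)}$ completes the argument.

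There is no genuine obstacle; the only points needing care are that the \emph{strict} hypothesis $c<\nicefrac1N$ is precisely what guarantees $\text{TV}(P_X\|Q_X)>0$ so the ratio is legitimate (at $c=\nicefrac1N$ both $P_X$ and $Q_X$ degenerate to the uniform distribution and the construction collapses), and the observation that because $P_X-Q_X$ is a scalar multiple of a difference of two point masses, pushing it through $K$ rescales numerator and denominator by the \emph{same} constant — which is exactly what pins the restricted contraction ratio to the raw row-wise total variation distance appearing in \eqref{eq:Drobushin}.
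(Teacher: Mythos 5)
Your proof is correct, and it takes a somewhat different route from the paper's. The paper proves the same pointwise identity $\eta_\text{TV}^{\mathcal Q_\mathcal X(c)}(K)=\eta_\text{TV}(K)$ for every $K$, but does so by reparametrizing the constraint set as an affinely shrunk simplex, $\mathcal Q_\mathcal X(c)=\{c\mathbf 1+(1-Nc)V_X: V_X\in\mathcal P(\mathcal X)\}$, observing that every admissible difference $P_X-Q_X$ is exactly $(1-Nc)(V_X-W_X)$ with $V_X,W_X$ ranging over the full simplex, and letting the common factor $(1-Nc)$ cancel in the ratio; no appeal to \eqref{eq:Drobushin} is needed, and both inequalities drop out of a single change of variables. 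You instead split into two inequalities, get the easy one from set inclusion, and for the hard one invoke the Dobrushin two-point characterization \eqref{eq:Drobushin} to reduce to a pair of rows, then exhibit an explicit pair $P_X,Q_X\in\mathcal Q_\mathcal X(c)$ with $P_X-Q_X=(1-Nc)(\delta_{x_1}-\delta_{x_2})$. Both arguments ultimately rest on the same scale-invariance of the TV ratio under the homothety that maps $\mathcal P(\mathcal X)$ onto $\mathcal Q_\mathcal X(c)$; yours is more constructive (it names the achieving pair inside $\mathcal Q_\mathcal X(c)$) and makes transparent why strictness $c<\nicefrac{1}{N}$ is needed, while the paper's is marginally more self-contained since it does not rely on the extreme-point formula \eqref{eq:Drobushin}. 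Your observation that the $(\varepsilon,c)$-PML constraint on $K$ is irrelevant to the pointwise identity matches the paper, which likewise proves the statement for arbitrary kernels before taking the supremum over $\mathcal M(\varepsilon,c)$.
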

Lemma \ref{lem:QcdontchangeTVcontr} implies that, when considering the contraction of the total variation distance with an $(\varepsilon,c)$-PML constraint, the restriction of input distributions only changes the set of feasible mechanisms, while the contraction coefficient itself takes the input-distribution-independent form given in \eqref{eq:Drobushin}. 

The main statement of this work is the following.
    \begin{theorem}
    \label{thm:summary}
        Let $0 < c \leq \nicefrac{1}{N}$ and $\varepsilon\geq0$. If a Markov kernel $K \in \mathcal M(\varepsilon,c)$, that is, if $K$ satisfies $(\varepsilon,c)$-PML, then, 
        \begin{equation}
            \eta_\text{TV}(\varepsilon,c) \leq \min\Bigg\{\frac{e^\varepsilon-1}{e^\varepsilon(1-Nc)+1},\,1\Bigg\}.
        \end{equation}
    \end{theorem}
The remainder of this section is dedicated to the proof of this theorem. We proceed in two distince steps.

First, we characterize the contraction coefficients of private kernels when the privacy-guarantee is not strict.
\begin{theorem}
\label{thm:fdivcontraction}
    Let $0<c\leq\nicefrac{1}{N}$. If $\varepsilon \geq \log(\nicefrac{2}{Nc})$, then
    \begin{equation}
        \sup_{K\in\mathcal M(\varepsilon,c)}\eta_f(K) = 1,
    \end{equation}
    for any convex and twice-differentiable $f$. 
\end{theorem}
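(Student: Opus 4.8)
The plan is to establish the identity by exhibiting, for every $\varepsilon\geq\log(\nicefrac{2}{Nc})$, a single kernel $K\in\mathcal M(\varepsilon,c)$ with $\eta_f(K)=1$; the opposite inequality $\sup_{K\in\mathcal M(\varepsilon,c)}\eta_f(K)\leq1$ is immediate from the DPI. Guided by Lemma~\ref{lem:cardinality}, I look for a binary-output kernel, and the construction I have in mind is $K\in\mathcal S_{N,2}$ obtained by fixing two distinct inputs $x_0,x_1\in\mathcal X$, setting the row of $x_0$ to $(1,0)$, the row of $x_1$ to $(0,1)$, and every remaining row to $(\nicefrac{1}{2},\nicefrac{1}{2})$. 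The purpose of the $(\nicefrac{1}{2},\nicefrac{1}{2})$ ``filler'' rows is to symmetrize the two output columns so that both outcomes carry exactly the same worst-case leakage $\log(\nicefrac{2}{Nc})$.

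First I would verify $K\in\mathcal M(\varepsilon,c)$. Since $\max_{x}K_{Y|X=x}(y_0)=K_{Y|X=x_0}(y_0)=1$, the leakage to $y_0$ is $\ell_{K\times P_X}(X\to y_0)=-\log\bigl((K\circ P_X)(y_0)\bigr)$, with $(K\circ P_X)(y_0)=P_X(x_0)+\tfrac{1}{2}\bigl(1-P_X(x_0)-P_X(x_1)\bigr)$. Over $P_X\in\mathcal Q_\mathcal X(c)$ this is minimized by taking $P_X(x_0)=c$ and $P_X(x_1)$ maximal, namely $P_X(x_1)=1-(N-1)c$, which is feasible (i.e.\ $\geq c$) precisely because $c\leq\nicefrac{1}{N}$; the resulting minimum value is $\nicefrac{Nc}{2}$. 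The symmetry swapping $(x_0,y_0)$ with $(x_1,y_1)$ gives the same bound for $y_1$, so $C(K,\mathcal Q_\mathcal X(c))=\log(\nicefrac{2}{Nc})\leq\varepsilon$.

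Next I would show $\eta_f(K)=1$ by restricting the supremum defining $\eta_f(K)$ to pairs $P_X,Q_X$ supported on $\{x_0,x_1\}$, written $P_X=(p,1-p)$ and $Q_X=(q,1-q)$ in the coordinates $(x_0,x_1)$ with $p\neq q$. On this two-point family $K$ acts as the relabeling $x_0\mapsto y_0$, $x_1\mapsto y_1$, so $K\circ P_X=(p,1-p)$ and $K\circ Q_X=(q,1-q)$, whence $D_f(K\circ P_X\|K\circ Q_X)=qf(p/q)+(1-q)f(\tfrac{1-p}{1-q})=D_f(P_X\|Q_X)$; the last equality holds because $P_X$ and $Q_X$ vanish off $\{x_0,x_1\}$, so $D_f(P_X\|Q_X)$ reduces to exactly those two terms. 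Choosing $p,q$ with $D_f(P_X\|Q_X)>0$ --- such a pair exists unless $f$ is affine --- makes the ratio equal to $1$, so $\eta_f(K)\geq1$, and with the DPI, $\eta_f(K)=1$.

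I do not expect a genuine obstacle; the argument is a construction followed by two short verifications. The step requiring the most care is the worst-case PML computation: one has to see that the $(\nicefrac{1}{2},\nicefrac{1}{2})$ filler rows are exactly what drives the leakage of both outcomes down to the target $\log(\nicefrac{2}{Nc})$ --- a deterministic two-way partition of $\mathcal X$ into halves leaks $\log\bigl(1/(\lfloor N/2\rfloor c)\bigr)$ instead, which can exceed $\varepsilon$ when $N$ is odd --- and one has to check the feasibility bound $1-(N-1)c\geq c$, which is where $c\leq\nicefrac{1}{N}$ enters. The degenerate case of affine $f$ (for which $D_f\equiv0$ and $\eta_f$ is vacuous) is presumably excluded by the term ``$f$-divergence'' and can be dismissed in one line.
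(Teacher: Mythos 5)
Your proof is correct, and it reaches the conclusion by a genuinely more elementary route than the paper. The mechanism is essentially the same in both arguments: the paper also takes a binary-output kernel whose rows for two distinguished inputs are $(1,0)$ and $(0,1)$ (your $(\nicefrac12,\nicefrac12)$ filler rows are exactly the choice that makes the paper's construction meet the $\log(\nicefrac{2}{Nc})$ leakage budget, since the proof of Lemma~\ref{lem:M0isemptyifLowPrivacy} only gives equality when the column sums are balanced, so your explicit PML verification is if anything more complete on this point). Where you diverge is the step from the construction to $\eta_f(K)=1$: the paper observes that the kernel is \emph{decomposable} in the sense of \eqref{eq:conditionnocontraction}, invokes the maximal-correlation characterization $\eta_{\chi^2}(K)=s(K)$ together with the Ahlswede--G\'acs result that $s(K)<1$ iff decomposability fails, and then uses the equality $\eta_{\chi^2}(K)=\eta_f(K)$ for twice-differentiable $f$. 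You instead restrict the supremum defining $\eta_f(K)$ to input pairs supported on $\{x_0,x_1\}$, on which $K$ acts as a relabeling, so the $f$-divergence is preserved exactly and the DPI is tight. Your argument buys two things: it is self-contained (no maximal-correlation machinery), and it does not actually use twice-differentiability of $f$ --- it works for any convex $f$ with $f(1)=0$ that is not affine, which is slightly more general than the stated hypothesis. The paper's route buys brevity via citation and situates the result within the standard $\eta_{\chi^2}\le\eta_f\le\eta_{\mathrm{TV}}$ framework. Your handling of the affine degenerate case is the right disclaimer; the paper has the same implicit exclusion through the $f''(1)>0$ requirement hidden in the cited equality of contraction coefficients.
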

\begin{IEEEproof}
    The key observation is that the mechanism $K = [p,1-p]$ with $p\in[0,1]^{N}$ defined by  $p = [\nicefrac{1}{2},\dots,\nicefrac{1}{2},1,0]^\top$
    satisfies $\log(\nicefrac{2}{Nc})$-PML. Such a mechanism induces a \emph{decomposable} pair of random variables, that is, $(X,Y)$ such that
    \begin{equation}
    \label{eq:conditionnocontraction}
        \exists x,x' \in \mathcal X: \supp\big(K_{Y|X=x}\big) \cap \supp\big(K_{Y|X=x'}\big) = \emptyset.
    \end{equation}
  Let  $\rho_m(X;Y)$ be the \emph{maximal correlation} between $X$ and $Y$. It is shown in, e.g., \cite[Thm. 33.6(c)]{Polyanskiy_Wu_2025} that,
    \begin{equation}
        \eta_{\chi^2}(K) = \sup_{P_X \in\mathcal P_{\mathcal X}} \rho_m^2(X;Y) \eqqcolon s(K). 
    \end{equation}
    It was shown by \citet{ahlswede1976spreading} that $s(K) < 1$ iff \eqref{eq:conditionnocontraction} is false. Together with the data processing inequality for $f$-divergences, this implies that whenever \eqref{eq:conditionnocontraction} holds, we have $\eta_{\chi^2}(K) = 1$. The claim follows since $\eta_{\chi^2}(K) = \eta_f(K)$ for any convex and twice-differentiable $f$ \cite[Thm 33.6(b)]{Polyanskiy_Wu_2025}.
\end{IEEEproof}
We remark that Theorem \ref{thm:fdivcontraction} bounds the contraction coefficient of $f$-divergences \emph{without} any restriction of the set of input distributions. The privacy constraint in Theorem \ref{thm:fdivcontraction} should therefore only be seen as a two-number summary of the channel $K$. In particular, if the set of input distributions is restricted, for example to $\mathcal Q_\mathcal X(c)$, some $f$-divergences might still contract. For total variation distance, Theorem \ref{thm:fdivcontraction} together with $\eta_{f}(K) \leq \eta_\text{TV}(K)\leq 1$ shows the latter part of the $\min$ of Theorem \ref{thm:summary}, that is, if $\varepsilon \geq \log (\nicefrac{2}{Nc})$, then $\eta_\text{TV}(\varepsilon,c)=1$. 

Next, we analyze the case $\varepsilon < \log(\nicefrac{2}{Nc})$. Note that the set $\mathcal M(\varepsilon,c)$ of mechanisms satisfying $(\varepsilon,c)$-PML makes no assumptions about the support size of the induced output variable $Y$, and therefore is a subset of $\bigcup_{L=2}^\infty \mathcal S_{N,L}$. In Lemma \ref{lem:cardinality} below, we show that it is enough to restrict the optimization in \eqref{eq:etaTVepscdef} to kernels that induce a binary output, that is, to $K\in \mathcal M(\varepsilon,c)\cap \mathcal S_{N,2}$. We also remark that \cite{ordentlich2021strong} shows that $\eta_f$ is achieved by input distributions with binary support. However, due to the restriction of input distributions to $\mathcal Q_\mathcal X(c)$, this result does not directly apply to the presented problem.
\begin{lemma}
\label{lem:cardinality}
    To find the value of $\eta_\text{TV}(\varepsilon,c)$, it is sufficient to consider mechanisms in $\mathcal M(\varepsilon,c)$ with $|\mathcal Y|=2$.
\end{lemma}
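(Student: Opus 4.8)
The plan is to show that for any kernel $K \in \mathcal M(\varepsilon,c)$ there exists a kernel $\tilde K \in \mathcal M(\varepsilon,c) \cap \mathcal S_{N,2}$ with $\eta_\text{TV}(\tilde K) \geq \eta_\text{TV}(K)$. The starting point is the Dobrushin formula \eqref{eq:Drobushin}: pick the pair $x^\star \neq x'^\star$ achieving (or approaching) the supremum, so that $\eta_\text{TV}(K) = \text{TV}(K_{Y|X=x^\star} \| K_{Y|X=x'^\star})$. Recall that the total variation distance between two distributions on $\mathcal Y$ can be written as $\text{TV}(P\|Q) = \sum_{y : P(y) \geq Q(y)} (P(y) - Q(y)) = P(A) - Q(A)$ where $A = \{y : K_{Y|X=x^\star}(y) \geq K_{Y|X=x'^\star}(y)\}$. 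This suggests the natural binary merge: define $\tilde K$ by lumping all output symbols in $A$ into a single symbol $0$ and all symbols in $A^c$ into a single symbol $1$, i.e.\ $\tilde K_{Y|X=x}(0) = \sum_{y\in A} K_{Y|X=x}(y)$ and $\tilde K_{Y|X=x}(1) = \sum_{y \in A^c} K_{Y|X=x}(y)$. Since $\text{TV}$ is an $f$-divergence, the data-processing inequality applied to this deterministic "merge" kernel could only decrease it; but because $A$ is exactly the set on which the sign of $K_{Y|X=x^\star} - K_{Y|X=x'^\star}$ is non-negative, the merge preserves the TV distance between the two selected rows exactly: $\text{TV}(\tilde K_{Y|X=x^\star} \| \tilde K_{Y|X=x'^\star}) = \text{TV}(K_{Y|X=x^\star} \| K_{Y|X=x'^\star}) = \eta_\text{TV}(K)$. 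Hence $\eta_\text{TV}(\tilde K) \geq \eta_\text{TV}(K)$.

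The remaining — and main — obstacle is to verify that the merged kernel $\tilde K$ still satisfies $(\varepsilon,c)$-PML, i.e.\ $\tilde K \in \mathcal M(\varepsilon,c)$. This requires controlling $\ell_{\tilde K \times P_X}(X \to \tilde y)$ for $\tilde y \in \{0,1\}$ and all $P_X \in \mathcal Q_\mathcal X(c)$. The concern is that merging outputs can \emph{increase} PML: the numerator $\max_x \tilde K_{Y|X=x}(\tilde y) = \max_x \sum_{y \in A} K_{Y|X=x}(y)$ need not be dominated by $\sum_{y\in A}\max_x K_{Y|X=x}(y)$ in a way that matches the behaviour of the denominator $(\tilde K \circ P_X)(\tilde y) = \sum_{y \in A}(K\circ P_X)(y)$. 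The way around this is to use the operational/robustness properties of PML emphasized in the paragraph before the lemma and in \cite{saeidian2023inferential}: PML is the worst-case gain-function leakage, and merging output symbols corresponds to post-processing $Y$ by a deterministic function $g : \mathcal Y \to \{0,1\}$. Since any gain function composed with $g$ is still a gain function of $(X, g(Y))$ realizable as a (coarser) gain function on $(X,Y)$, the adversary observing $g(Y)$ is weakly less powerful than one observing $Y$; concretely, the supremum defining $C(\tilde K, \mathcal Q_\mathcal X(c))$ is over a restricted class of attacks compared to $C(K, \mathcal Q_\mathcal X(c))$, so $C(\tilde K, \mathcal Q_\mathcal X(c)) \leq C(K, \mathcal Q_\mathcal X(c)) \leq \varepsilon$. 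This data-processing property of $\mathcal P$-local leakage capacity under post-processing of the output is exactly what makes the merge admissible; I would state it as a short auxiliary fact (it follows directly from the gain-function characterization of PML in \cite{saeidian2023pointwise}, or alternatively from the simplified expression $\ell_{K\times P_X}(X\to y)=\log\frac{\max_x K_{Y|X=x}(y)}{(K\circ P_X)(y)}$ together with the log-sum inequality: $\max_x \sum_{y\in A} K_{Y|X=x}(y) \big/ \sum_{y\in A}(K\circ P_X)(y) \leq \max_{y\in A} \big(\max_x K_{Y|X=x}(y) / (K\circ P_X)(y)\big)$).

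Putting the pieces together: given $K \in \mathcal M(\varepsilon,c)$, the binary merge $\tilde K$ along the Dobrushin-optimal sign set $A$ lies in $\mathcal M(\varepsilon,c) \cap \mathcal S_{N,2}$ and has $\eta_\text{TV}(\tilde K) \geq \eta_\text{TV}(K)$; taking suprema over $K$ gives $\eta_\text{TV}(\varepsilon,c) = \sup_{K \in \mathcal M(\varepsilon,c)\cap\mathcal S_{N,2}} \eta_\text{TV}(K)$, which is the claim. One technical caveat to handle is the case where the supremum in \eqref{eq:Drobushin} is not attained (only relevant if $\mathcal Y$ is countably infinite, since for each fixed finite $\mathcal Y$ it is a max over finitely many pairs); there one works with a sequence of pairs approaching the supremum and a corresponding sequence of binary kernels, then passes to the limit using compactness of $\mathcal S_{N,2}$ and continuity of $\eta_\text{TV}$ on it, checking that $\mathcal M(\varepsilon,c)\cap\mathcal S_{N,2}$ is closed. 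I expect the PML-preservation step to be where the real content lies; the TV-preservation step is the standard "DPI is tight on the level set where the likelihood-ratio sign is constant" argument.
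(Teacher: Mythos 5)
Your proposal is correct and follows essentially the same route as the paper: a binary merge of the output alphabet along a TV-optimal set, with admissibility of the merged kernel guaranteed by the post-processing (data-processing) property of PML, and preservation of the relevant total variation by construction of the merging set. The only cosmetic difference is that you choose the merging set from the two extremal rows via the Dobrushin form \eqref{eq:Drobushin}, whereas the paper chooses it from the output marginals $K\circ P_X$ and $K\circ Q_X$ for the optimizing input pair; both yield the same conclusion.
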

This lemma ensures that we can limit the optimizations to binary-output mechanisms in what follows. 
The following statement characterizes the optimal solution to \eqref{eq:etaTVepscdef} for the cases that fall outside of the scope of Theorem \ref{thm:fdivcontraction}. 
\begin{theorem}
\label{thm:M1bound}
    For any $N\geq 2$, let $c \in (0,\nicefrac{1}{N}]$ and assume $0\leq \varepsilon < \log\nicefrac{2}{Nc}$. Then,
    \begin{equation}
    \label{eq:optimalM1bound}
        \eta_\text{TV}(\varepsilon,c) \leq \frac{e^\varepsilon-1}{e^\varepsilon(1-Nc)+1}.
    \end{equation}
\end{theorem}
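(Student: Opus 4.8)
My plan is to use Lemma~\ref{lem:cardinality} to reduce to binary-output mechanisms $K\in\mathcal S_{N,2}$, parametrise such a $K$ by its first column $p_x\coloneqq K_{Y|X=x}(1)$, and --- relabelling inputs, which changes neither $C(K,\mathcal Q_\mathcal X(c))$ nor $\eta_\text{TV}$ --- assume $p_1\le\cdots\le p_N$, so that $\eta_\text{TV}(K)=p_N-p_1$. The first step is to make $(\varepsilon,c)$-PML explicit. Since $\ell_{K\times P_X}(X\to y)=\log\frac{\max_x K_{Y|X=x}(y)}{(K\circ P_X)(y)}$ has a numerator independent of $P_X$, the supremum over $P_X\in\mathcal Q_\mathcal X(c)$ is controlled by a minimum of a linear functional over the polytope $\{P_X:P_X(x)\ge c\ \forall x\}$; writing $P_X(x)=c+\lambda_x$ with $\lambda_x\ge0$ and $\sum_x\lambda_x=1-Nc$ shows this minimum places all the residual mass $1-Nc$ on the smallest coordinate, so that $\min_{P_X\in\mathcal Q_\mathcal X(c)}\sum_x p_x P_X(x)=(1-Nc)p_1+c\sum_x p_x$. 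Applying this for $y=1$ and, via $\max_x(1-p_x)=1-p_1$, $\min_x(1-p_x)=1-p_N$, for $y=2$, the condition $C(K,\mathcal Q_\mathcal X(c))\le\varepsilon$ is equivalent to the pair
\begin{align}
  p_N&\le e^\varepsilon\Big[(1-Nc)\,p_1+c\sum_{x}p_x\Big],\label{eq:planC1}\\
  1-p_1&\le e^\varepsilon\Big[(1-Nc)(1-p_N)+c\sum_{x}(1-p_x)\Big],\label{eq:planC2}
\end{align}
valid for every binary-output $K\in\mathcal M(\varepsilon,c)$ regardless of whether $K$ has zero entries.

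The key step is that summing \eqref{eq:planC1} and \eqref{eq:planC2} makes the $\sum_x p_x$ terms cancel: with $\Delta\coloneqq p_N-p_1$ one gets $1+\Delta\le e^\varepsilon\big[(1-Nc)(1-\Delta)+Nc\big]$, which rearranges to $\Delta\le\frac{e^\varepsilon-1}{e^\varepsilon(1-Nc)+1}$. As \eqref{eq:planC1}--\eqref{eq:planC2} hold for all $K\in\mathcal M_1(\varepsilon,c)\cup\mathcal M_2(\varepsilon,c)$, this gives both $\eta_\text{TV}(K_1^*)\le\frac{e^\varepsilon-1}{e^\varepsilon(1-Nc)+1}$ and $\eta_\text{TV}(K_2^*)\le\frac{e^\varepsilon-1}{e^\varepsilon(1-Nc)+1}$ at once. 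For the matching lower bound within $\mathcal M_1$ I would take the explicit mechanism $p_1=\alpha$, $p_2=\cdots=p_{N-1}=\tfrac12$, $p_N=1-\alpha$ with $\alpha=\frac{2-e^\varepsilon Nc}{2\,(e^\varepsilon(1-Nc)+1)}$. The hypothesis $\varepsilon<\log\tfrac2{Nc}$ forces $0<\alpha\le\tfrac12$, so all entries are strictly in $(0,1)$, i.e. $K\in\mathcal T_1$; since $\sum_x p_x=\tfrac N2$, the right-hand sides of \eqref{eq:planC1} and \eqref{eq:planC2} both equal $e^\varepsilon\big[(1-Nc)\alpha+\tfrac{Nc}2\big]$ and the left-hand sides both equal $1-\alpha$, and the value of $\alpha$ makes both tight. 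Hence $C(K,\mathcal Q_\mathcal X(c))=\varepsilon$, so $K\in\mathcal M_1(\varepsilon,c)$, and $\eta_\text{TV}(K)=1-2\alpha=\frac{e^\varepsilon-1}{e^\varepsilon(1-Nc)+1}$, which proves \eqref{eq:optimalM1bound}.

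It remains to show $\mathcal M_0(\varepsilon,c)=\emptyset$; this is exactly the referenced Lemma~\ref{lem:M0isemptyifLowPrivacy} and can be cited, but the argument is short. A $K\in\mathcal T_0$ has some $p_x=0$ and some $p_{x'}=1$, so $\max_x p_x=\max_x(1-p_x)=1$, and writing $q=(K\circ P_X)(1)$, $S=\sum_x p_x$ we get $\sup_y\ell_{K\times P_X}(X\to y)=\log\frac1{\min(q,1-q)}$, with $q$ ranging over the interval $[\,cS,\ 1-c(N-S)\,]$ as $P_X$ runs over $\mathcal Q_\mathcal X(c)$. Minimising the tent function $q\mapsto\min(q,1-q)$ over that interval yields $\min(cS,c(N-S))=c\min(S,N-S)$ (here $cN\le1$ is used), hence $C(K,\mathcal Q_\mathcal X(c))=-\log\big(c\min(S,N-S)\big)\ge-\log\tfrac{Nc}2=\log\tfrac2{Nc}>\varepsilon$, so $K\notin\mathcal M(\varepsilon,c)$.

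I expect the difficulty to be bookkeeping rather than any single idea: correctly identifying the extremal prior in $\mathcal Q_\mathcal X(c)$ in each situation (which coordinate absorbs the residual mass $1-Nc$), checking that the achievability mechanism lands in $\mathcal T_1$ under $\varepsilon<\log\tfrac2{Nc}$, and the elementary interval-minimisation in the $\mathcal M_0$ step. Once \eqref{eq:planC1}--\eqref{eq:planC2} are in hand, summing the two constraints does essentially all of the work.
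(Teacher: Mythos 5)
Your proof is correct, but the core of your argument is genuinely different from (and slicker than) the paper's. Both proofs start identically: reduce to $|\mathcal Y|=2$ via Lemma~\ref{lem:cardinality} and translate $(\varepsilon,c)$-PML into the two linear constraints $\max_i p_i \le e^\varepsilon\big(c\sum_i p_i + (1-Nc)\min_i p_i\big)$ and its mirror image for the second column (this is exactly Lemma~\ref{lem:setcharacterization} in the paper's appendix, with the worst-case prior putting the residual mass $1-Nc$ on the minimizing coordinate). From there the paths diverge. The paper treats $\mathcal M_1$ by a polytope extreme-point argument (identifying the two-level mechanisms $K_q$ of Remark~\ref{rem:optimalM} as the extreme points and invoking convexity of $\eta_\text{TV}$), and then needs a separate, somewhat delicate perturbation-and-continuity argument (Lemma~\ref{lem:notonezero}) to show $\eta_\text{TV}(K_2^*)\le\eta_\text{TV}(K_1^*)$. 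You instead simply add the two constraints, so that $\sum_i p_i$ cancels and $\Delta = \max_i p_i - \min_i p_i$ satisfies $1+\Delta \le e^\varepsilon\big[(1-Nc)(1-\Delta)+Nc\big]$, i.e.\ $\Delta \le \frac{e^\varepsilon-1}{e^\varepsilon(1-Nc)+1}$, \emph{uniformly over all binary-output members of $\mathcal M(\varepsilon,c)$}. This single inequality delivers the upper bound for $\mathcal M_1$ and $\mathcal M_2$ simultaneously and makes the paper's perturbation lemma unnecessary. Your achievability mechanism ($p_1=\alpha$, interior entries $\tfrac12$, $p_N=1-\alpha$) differs from the paper's two-level family $K_q$ but checks out: the hypothesis $\varepsilon<\log\frac{2}{Nc}$ gives $0<\alpha\le\tfrac12$, both constraints are tight, and $1-2\alpha$ equals the claimed value. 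Your $\mathcal M_0=\emptyset$ argument is essentially the paper's Lemma~\ref{lem:M0isemptyifLowPrivacy}. What the paper's route buys that yours does not is the explicit catalogue of \emph{all} extremal mechanisms (used in Remark~\ref{rem:optimalM} and Example~\ref{ex:binary}); what yours buys is a shorter, fully rigorous upper bound that sidesteps the asserted-but-not-fully-justified extreme-point structure and the limiting argument for $\mathcal M_2$.
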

\begin{IEEEproof}
Due to Lemma \ref{lem:cardinality}, we consider the case $\mathcal Y = \{0,1\}$. Fix any $x,x'\in\mathcal X$. Let $S=\sum_{x\in\mathcal X} K_{Y|X=x}(0)$. For any $K\in\mathcal M(\varepsilon,c)$, due to Lemma \ref{lem:setcharacterization} in Appendix \ref{app:polytope}, 
\begin{equation}
    K_{Y|X=x}(0) \leq e^\varepsilon (cS +(1-Nc)K_{Y|X=x'}(0)),
\end{equation}
\begin{equation}
    1-K_{Y|X=x'}(0) \leq e^\varepsilon (c(N-S)+(1-Nc)(1-K_{Y|X=x}(0)).
\end{equation}
Summing up these two inequalities and rearranging, we obtain,
\begin{equation}
    (K_{Y|X=x}(0) - K_{Y|X=x'}(0))(1+e^\varepsilon(1-Nc)) \leq e^\varepsilon-1.
\end{equation}
Since the choice of $x,x' \in\mathcal X$ was arbitrary, this holds for any pair. In particular, we have,
\begin{align}
    \eta_\text{TV}(K) &= \max_{x\neq x'} \max_{y\in\{0,1\}}|K_{Y|X=x}(y)-K_{Y|X=x'}(y)| \\ &\leq\frac{e^\varepsilon-1}{1+e^\varepsilon(1-Nc)},
\end{align}
since if $\mathcal Y= \{0,1\}$, we have $|K_{Y|X=x}(0)-K_{Y|X=x'}(0)| = |K_{Y|X=x}(1)-K_{Y|X=x'}(1)|$ for all $x,x' \in\mathcal X$.
\end{IEEEproof}

\begin{remark}
    If $c\to0$, the bound in Theorem \ref{thm:M1bound} recovers the bound \eqref{eq:kairouzbound} for local differential privacy, as we expect due to the fact that $(\varepsilon,0)$-PML $\iff$ $\varepsilon$-LDP. On the other hand, if $c=\nicefrac{1}{N}$, we have $\eta_\text{TV}(\varepsilon,\nicefrac{1}{N})=e^\varepsilon-1$, which is also achieved by the optimal mechanisms designed with respect to the uniform distribution in \cite{10646583}. 
\end{remark}
\begin{remark}
\label{rem:optimalM}
    A binary-output construction achieving the bound in Theorem \ref{thm:M1bound} is the following: Let $q\in\{1,...,N-1\}$ and let 
    \begin{equation}
        m = \frac{1-e^\varepsilon c q}{1+e^\varepsilon(1-Nc)}, \quad \text{and } M =\frac{e^\varepsilon(1-cq)}{1+e^\varepsilon(1-Nc)}.
    \end{equation}
    From these values, the first column $p\in[0,1]^N$ of an optimal mechanism $K^* = [p,1-p]$ is given as,
    \begin{equation}
        p=[\underbrace{M,\dots,M}_{q\text{ times}},\underbrace{m\dots,m}_{N-q\text{ times}}]^\top.
    \end{equation}
\end{remark}
Theorem~\ref{thm:fdivcontraction} and Theorem~\ref{thm:M1bound} together show the statement in Theorem~\ref{thm:summary}. We conclude this section with an example.
\begin{example}[Binary mechansim]
\label{ex:binary}
    If $N=2$, the only choice in Remark \ref{rem:optimalM} is $q=1$. Hence, an optimal mechanism for $(\varepsilon,c)$-PML in the binary-input case is
    \begin{equation}
        K = \frac{1}{e^\varepsilon(1-2c)+1}\begin{bmatrix}
            e^\varepsilon(1-c) & 1-e^\varepsilon c \\
            1-e^\varepsilon c & e^\varepsilon(1-c)
        \end{bmatrix}.
    \end{equation}
    This coincides with the optimal binary mechanism for PML in an $\ell_1$-uncertainty set of radius $1-2c$, as presented in \cite[Theorem 5]{grosse2025privacy}.\footnote{Note that for $|\mathcal X|=2$, $\mathcal Q_\mathcal X(c)$ is an $\ell_1$-ball with radius $1-2c$.} For $c\to 0$, $K$ simplifies to Warner's randomized response \cite{warnerRRoriginal}, the optimal binary mechanism for $\varepsilon$-LDP \cite{kairouz2016extremal}.
\end{example}

\section{Extension to General $f$-Divergences}
\label{sec:fdiv}
Finally, we show how the above bounds on the private Dobrushin coefficient can be used to obtain bounds on general $f$-divergences. The application is based on the $f$-divergence inequality originally presented by \citet{binette2019note}, and related approaches are taken in \cite{hirche2024quantum,nuradha2025non,grosse2025bounds}. In particular, we apply Binettes generalization of reverse Pinsker's inequality \cite[Theorem 1]{binette2019note} \say{across} a kernel satisfying $(\varepsilon,c)$-PML, giving,
\begin{align}
\label{eq:binette}
    &D_f(K\circ P_X||K\circ Q_X) \\&\leq\eta_\text{TV}(\varepsilon,c)\Bigg[\frac{f\big(\Gamma_{\min}(\varepsilon,c)\big)}{1-\Gamma_{\min}(\varepsilon,c)}+\frac{f\big(\Gamma_{\max}(\varepsilon,c)\big)}{\Gamma_{\max}(\varepsilon,c)-1}\Bigg]\delta,
\end{align}
with $\delta \coloneqq \text{TV}(P_X||Q_X)$ and $\Gamma_{\max}$, $\Gamma_{\min}$ defined by
\begin{equation}
    \Gamma_{\max}(\varepsilon,c) \coloneqq \sup_{P_X,Q_X \in \mathcal Q_\mathcal X(c)} \sup_{K\in\mathcal M(\varepsilon,c)}\sup_{y\in\mathcal Y}\frac{(K\circ P_X)(y)}{(K\circ Q_X)(y)},
\end{equation}
\begin{equation}
    \Gamma_{\min}(\varepsilon,c) \coloneqq \inf_{P_X,Q_X \in \mathcal Q_\mathcal X(c)} \inf_{K\in\mathcal M(\varepsilon,c)}\inf_{y\in\mathcal Y}\frac{(K\circ P_X)(y)}{(K\circ Q_X)(y)}.
\end{equation}
The following proposition bounds these extremal quantities in terms of $\varepsilon$ and $c$, and therefore enables us to obtain a private bound on any $f$-divergence in the $(\varepsilon,c)$-PML case via \eqref{eq:binette}.
\begin{proposition}
\label{prop:Gammabound}
   Let $c\in(0,\nicefrac{1}{N}]$, $0\leq \varepsilon\leq -\log(c)$. Whenever $K\in \mathcal M(\varepsilon,c)$, we have $\Gamma_{\max}(\varepsilon,c) \leq (1-Nc)e^\varepsilon + 1$ and $\Gamma_{\min}(\varepsilon,c) \geq \big((1-Nc)e^\varepsilon+1\big)^{-1}$.
\end{proposition}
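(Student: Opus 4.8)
The plan is to reduce the proposition to a per-output linear-programming computation and then invoke the $(\varepsilon,c)$-PML constraint once. Fix $K\in\mathcal M(\varepsilon,c)$ and an output symbol $y\in\mathcal Y$, and write $k_x\coloneqq K_{Y|X=x}(y)$, $S\coloneqq\sum_{x\in\mathcal X}k_x$, $a\coloneqq\max_{x\in\mathcal X}k_x$ and $b\coloneqq\min_{x\in\mathcal X}k_x$. Because the two suprema in the definition of $\Gamma_{\max}(\varepsilon,c)$ range over independently chosen $P_X$ and $Q_X$, it factors as $\Gamma_{\max}(\varepsilon,c)=\sup_{K,y}\big[\sup_{P_X\in\mathcal Q_\mathcal X(c)}(K\circ P_X)(y)\big]\big/\big[\inf_{Q_X\in\mathcal Q_\mathcal X(c)}(K\circ Q_X)(y)\big]$. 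Writing an arbitrary $P_X\in\mathcal Q_\mathcal X(c)$ as $P_X(x)=c+r_x$ with $r_x\geq0$ and $\sum_x r_x=1-Nc$ (feasible since $c\leq1/N$) gives $(K\circ P_X)(y)=cS+\sum_x k_x r_x$, so the inner supremum equals $cS+(1-Nc)a$ (place the excess mass $1-Nc$ on an argmax coordinate) and the inner infimum equals $cS+(1-Nc)b$.

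Next I would bring in privacy. Membership $K\in\mathcal M(\varepsilon,c)$ forces, in particular at this fixed $y$, that $a=\max_x K_{Y|X=x}(y)\leq e^\varepsilon(K\circ P_X)(y)$ for every $P_X\in\mathcal Q_\mathcal X(c)$; minimizing the right side over $\mathcal Q_\mathcal X(c)$ and using the previous paragraph gives $a\leq e^\varepsilon\big(cS+(1-Nc)b\big)$. Put $D\coloneqq cS+(1-Nc)b$, which we may take strictly positive (if $D=0$ then the $y$-column of $K$ is identically zero and can be discarded, or handled by continuity). Then for this $(K,y)$,
\[
\frac{cS+(1-Nc)a}{D}=1+\frac{(1-Nc)(a-b)}{D}\leq 1+\frac{(1-Nc)\,a}{D}\leq 1+(1-Nc)e^\varepsilon ,
\]
where the first bound uses $b\geq0$ and the second uses $a\leq e^\varepsilon D$. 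Taking the supremum over $K\in\mathcal M(\varepsilon,c)$ and $y\in\mathcal Y$ yields $\Gamma_{\max}(\varepsilon,c)\leq(1-Nc)e^\varepsilon+1$.

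The bound on $\Gamma_{\min}$ then comes for free: for each fixed $(K,y)$ the ratio relevant to $\Gamma_{\min}$, namely $\big(cS+(1-Nc)b\big)\big/\big(cS+(1-Nc)a\big)$, is exactly the reciprocal of the one controlled above, so $\Gamma_{\min}(\varepsilon,c)=\Gamma_{\max}(\varepsilon,c)^{-1}\geq\big((1-Nc)e^\varepsilon+1\big)^{-1}$. I do not expect a genuine obstacle here: the only points requiring care are the degenerate all-zero column, the interchange justifying separate optimization of $P_X$ and $Q_X$, and the (elementary) fact that it suffices to use the PML guarantee at the single output $y$ under consideration rather than full membership in $\mathcal M(\varepsilon,c)$; everything else is a one-line linear program and a one-line algebraic identity.
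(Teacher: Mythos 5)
Your proposal is correct and follows essentially the same route as the paper: compute the inner supremum and infimum over $\mathcal Q_\mathcal X(c)$ as $cS+(1-Nc)\max_x k_x$ and $cS+(1-Nc)\min_x k_x$, then invoke the PML constraint in the form $\max_x k_x\leq e^\varepsilon\big(cS+(1-Nc)\min_x k_x\big)$ (which is exactly the paper's Lemma~\ref{lem:setcharacterization}, rederived inline in your argument) and finish with $\min_x k_x\geq 0$. Your explicit handling of the degenerate zero column and the reciprocal argument for $\Gamma_{\min}$ are minor refinements of what the paper leaves implicit.
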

The following examples illustrate the bounds obtained by this technique. Let $\Xi(\varepsilon,c) \coloneqq \min\big\{\frac{e^\varepsilon-1}{e^\varepsilon(1-Nc)+1},\,1\big\}$ below.
\begin{corollary}[Relative Entropy]
\label{corr:relativeentropy}
    For any $c\in(0,\nicefrac{1}{N}]$, $\varepsilon\geq 0$, assume the kernel $K$ satisfies $(\varepsilon,c)$-PML. Then we have for any $P_X,Q_X \in \mathcal Q_\mathcal X(c)$ with $\text{TV}(P_X||Q_X)\leq \delta$,
    \begin{align}
       &D(K\circ P_X||K\circ Q_X) \leq \Xi(\varepsilon,c)\log\Big((1-Nc)e^\varepsilon+1\Big)\delta.
    \end{align}
\end{corollary}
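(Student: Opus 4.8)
The plan is to specialize the $f$-divergence inequality \eqref{eq:binette} to $f(t)=t\log t$, combine it with the bound $\eta_\text{TV}(\varepsilon,c)\le\Xi(\varepsilon,c)$ from Theorem \ref{thm:summary}, and reduce the bracketed factor to $\log\Gamma$ with $\Gamma\coloneqq(1-Nc)e^\varepsilon+1$, using the bounds $\Gamma_{\max}\le\Gamma$ and $\Gamma_{\min}\ge\nicefrac{1}{\Gamma}$ from Proposition \ref{prop:Gammabound}. First I would dispose of the trivial boundary case $c=\nicefrac{1}{N}$, where $\mathcal Q_\mathcal X(c)$ is a singleton, so $P_X=Q_X$, $\delta=0$, and both sides vanish. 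Next I would reduce to the regime $0\le\varepsilon\le-\log c$: if $\varepsilon>-\log c$, then every kernel (in particular $K$) lies in $\mathcal M(-\log c,c)$, while $\Xi(\cdot,c)$ and $\varepsilon\mapsto\log((1-Nc)e^\varepsilon+1)$ are both nondecreasing, so the asserted bound at parameter $\varepsilon$ dominates the one at parameter $-\log c$ and it suffices to prove the latter. So assume $c<\nicefrac{1}{N}$ and $0\le\varepsilon\le-\log c$; then $\Gamma>1$ and Proposition \ref{prop:Gammabound} applies.

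The core computation is to bound the bracket $\frac{f(\Gamma_{\min})}{1-\Gamma_{\min}}+\frac{f(\Gamma_{\max})}{\Gamma_{\max}-1}$ for $f(t)=t\log t$. Writing $h(t)\coloneqq\frac{t\log t}{t-1}$ (extended continuously by $h(1)=1$), the bracket equals $h(\Gamma_{\max})-h(\Gamma_{\min})$. A one-line derivative calculation gives $h'(t)=\frac{(t-1)-\log t}{(t-1)^2}\ge 0$ (using $\log t\le t-1$), so $h$ is nondecreasing on $(0,\infty)$. Since $\Gamma_{\max}\le\Gamma$ and $\Gamma_{\min}\ge\nicefrac{1}{\Gamma}$ by Proposition \ref{prop:Gammabound}, monotonicity yields $h(\Gamma_{\max})\le h(\Gamma)=\frac{\Gamma\log\Gamma}{\Gamma-1}$ and $h(\Gamma_{\min})\ge h(\nicefrac{1}{\Gamma})=\frac{\log\Gamma}{\Gamma-1}$, hence the bracket is at most $\frac{\Gamma\log\Gamma}{\Gamma-1}-\frac{\log\Gamma}{\Gamma-1}=\log\Gamma=\log((1-Nc)e^\varepsilon+1)$. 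Note also that the bracket is nonnegative, since $\Gamma_{\min}\le 1\le\Gamma_{\max}$ (attained by taking $P_X=Q_X$) and $h$ is nondecreasing.

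Plugging this into \eqref{eq:binette} and using $\eta_\text{TV}(\varepsilon,c)\le\Xi(\varepsilon,c)$ together with nonnegativity of the bracket gives $D(K\circ P_X||K\circ Q_X)\le\Xi(\varepsilon,c)\log((1-Nc)e^\varepsilon+1)\,\delta$, which is the claim. I expect the only genuine subtlety — and hence the main obstacle — to be the bookkeeping around degenerate cases: verifying $\Gamma_{\min}\le 1\le\Gamma_{\max}$, checking that the limiting values of $\frac{f(t)}{t-1}$ as $\Gamma_{\min}\to 1$ or $\Gamma_{\max}\to 1$ match the continuous extension $h(1)=f'(1)=1$ so that \eqref{eq:binette} still reads correctly, and confirming that $K\circ P_X\ll K\circ Q_X$ (so that $D(\cdot||\cdot)$ is finite), which holds because $Q_X\in\mathcal Q_\mathcal X(c)$ has full support. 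The monotonicity of $h$ is the one computational ingredient, and it is elementary.
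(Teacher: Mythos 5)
Your proof is correct and follows exactly the route the paper intends: specialize \eqref{eq:binette} to $f(t)=t\log t$, bound $\eta_\text{TV}(\varepsilon,c)$ by $\Xi(\varepsilon,c)$ via Theorem~\ref{thm:summary}, and collapse the bracket to $\log\Gamma$ using Proposition~\ref{prop:Gammabound}. The paper leaves the corollary as an immediate consequence of these ingredients; your explicit verification that $t\mapsto f(t)/(t-1)$ is nondecreasing (which justifies replacing $\Gamma_{\max},\Gamma_{\min}$ by their bounds) and your handling of the regimes $c=\nicefrac{1}{N}$ and $\varepsilon>-\log c$ supply details the paper glosses over.
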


\begin{corollary}[Hellinger divergence]
\label{corr:hellinger}
    For any $c\in(0,\nicefrac{1}{N}]$, $\varepsilon\geq 0$, assume the kernel $K$ satisfies $(\varepsilon,c)$-PML. Then we have for any $P_X,Q_X \in \mathcal Q_\mathcal X(c)$ with $\text{TV}(P_X||Q_X)\leq \delta$,
    \begin{align}
        H^2&(K\circ P_X|| K\circ Q_X) \\&\leq \Xi(\varepsilon,c)\bigg(2-\frac{4}{\sqrt{(1-Nc)e^\varepsilon+1}+1}\bigg)\delta.
    \end{align}
\end{corollary}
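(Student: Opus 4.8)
The plan is to obtain the bound as a direct specialization of the generic $f$-divergence estimate \eqref{eq:binette} to the squared Hellinger generator $f(t)=(1-\sqrt t)^2$, feeding in the likelihood-ratio bounds of Proposition \ref{prop:Gammabound} and the Dobrushin bound of Theorem \ref{thm:summary}. Throughout, write $\gamma \coloneqq (1-Nc)e^\varepsilon+1 \geq 1$, so that Proposition \ref{prop:Gammabound} reads $\Gamma_{\min}(\varepsilon,c)\geq 1/\gamma$ and $\Gamma_{\max}(\varepsilon,c)\leq \gamma$, and recall that $\Gamma_{\min}(\varepsilon,c)\leq 1 \leq \Gamma_{\max}(\varepsilon,c)$ (taking $P_X=Q_X$ in the defining suprema/infima).

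First I would simplify the Binette bracket for $f(t)=(1-\sqrt t)^2$. Using $1-t=(1-\sqrt t)(1+\sqrt t)$ and $t-1=(\sqrt t-1)(\sqrt t+1)$,
\begin{equation}
    \frac{f(t)}{1-t}=\frac{1-\sqrt t}{1+\sqrt t}\ \ (0<t<1), \qquad \frac{f(t)}{t-1}=\frac{\sqrt t-1}{\sqrt t+1}\ \ (t>1),
\end{equation}
and both expressions extend continuously by the value $0$ at $t=1$, which is how the $0/0$ limiting case in \eqref{eq:binette} is to be read. Substituting $s=\sqrt t$ and differentiating shows that $t\mapsto \frac{1-\sqrt t}{1+\sqrt t}$ is decreasing on $(0,1]$ while $t\mapsto \frac{\sqrt t-1}{\sqrt t+1}$ is increasing on $[1,\infty)$. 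Hence replacing $\Gamma_{\min}(\varepsilon,c)$ by the smaller value $1/\gamma$ and $\Gamma_{\max}(\varepsilon,c)$ by the larger value $\gamma$ can only enlarge the bracket, and both resulting terms equal $\frac{\sqrt\gamma-1}{\sqrt\gamma+1}$. Therefore the bracket in \eqref{eq:binette} is at most
\begin{equation}
    2\cdot\frac{\sqrt\gamma-1}{\sqrt\gamma+1}=2-\frac{4}{\sqrt\gamma+1}.
\end{equation}

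Next I would combine this with $\eta_\text{TV}(\varepsilon,c)\leq \Xi(\varepsilon,c)$, which is exactly Theorem \ref{thm:summary} after taking the supremum over $K\in\mathcal M(\varepsilon,c)$. Plugging both into \eqref{eq:binette} gives $H^2(K\circ P_X||K\circ Q_X)\leq \Xi(\varepsilon,c)\big(2-\tfrac{4}{\sqrt\gamma+1}\big)\delta$, which upon writing out $\gamma=(1-Nc)e^\varepsilon+1$ is the claimed inequality. The one gap is that Proposition \ref{prop:Gammabound} is only stated for $\varepsilon\leq -\log c$; for $\varepsilon>-\log c$ one uses that \emph{every} kernel satisfies $(-\log c,c)$-PML (noted after Definition \ref{def:epscPML}), so the already-established inequality applies at parameter $-\log c$, and since both $\Xi(\varepsilon,c)$ and $\gamma$ are non-decreasing in $\varepsilon$ the right-hand side only grows, yielding the (weaker) claimed bound in that regime as well.

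I do not anticipate a genuine obstacle: the entire content is the algebraic collapse of Binette's bracket to $2-4/(\sqrt\gamma+1)$ for the Hellinger generator, plus the elementary monotonicity check that legitimizes substituting the explicit bounds of Proposition \ref{prop:Gammabound} for $\Gamma_{\min},\Gamma_{\max}$. The only point deserving a sentence of care is the boundary behavior at $t=1$, where \eqref{eq:binette} involves a $0/0$ expression; this is handled by the continuous extension above and does not affect the estimate.
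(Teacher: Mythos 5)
Your proposal is correct and follows exactly the route the paper intends: specializing \eqref{eq:binette} to $f(t)=(1-\sqrt t)^2$, using the monotonicity of $t\mapsto f(t)/(t-1)$ and $t\mapsto f(t)/(1-t)$ to substitute the bounds from Proposition \ref{prop:Gammabound}, and collapsing the bracket to $2-4/(\sqrt\gamma+1)$ with $\gamma=(1-Nc)e^\varepsilon+1$. The paper gives no explicit proof of this corollary, and your additional care about the $t=1$ limit and the $\varepsilon>-\log c$ regime only strengthens the argument.
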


The bounds presented in the above corollaries are illustrated in Figure \ref{fig:corrbounds} for two different kernels $K_1$ and $K_2$, which we define in the following example.
\begin{example}[Divergence bounds]
\label{ex:RR}
Let $N=10$ and consider the mechanism $K_1 \in \mathcal M(\log\nicefrac{10}{3},0.05)$ defined by
\begin{equation}
    K_1 = \begin{bmatrix}
        \nicefrac{15}{16} & \dots & \nicefrac{15}{16} &\nicefrac{1}{16} & \dots & \nicefrac{1}{16} \\
        \nicefrac{1}{16} & \dots & \nicefrac{1}{16} &\nicefrac{15}{16} & \dots & \nicefrac{15}{16}
    \end{bmatrix}^\top,
\end{equation}
further, for $N=5$, consider $K_2 \in \mathcal M(\log\nicefrac{10}{3},0.1)$ given by,
\begin{equation}
    K_2 = \begin{bmatrix}
        \nicefrac{1}{3} & \nicefrac{1}{3} & \nicefrac{1}{3} & 0 & 0 \\
        0 & \nicefrac{1}{3} & \nicefrac{1}{3} & \nicefrac{1}{3} & 0 \\
        0 & 0 & \nicefrac{1}{3} & \nicefrac{1}{3} & \nicefrac{1}{3} \\
        \nicefrac{1}{3} & 0 & 0 & \nicefrac{1}{3} & \nicefrac{1}{3} \\
        \nicefrac{1}{3} & \nicefrac{1}{3} & 0 & 0 & \nicefrac{1}{3}
    \end{bmatrix}.
\end{equation}
Note that $K_1$ also satisfies $\log(15)$-LDP. In Figure \ref{fig:corrbounds}(a), we compare our presented upper bound in Corollary \ref{corr:relativeentropy} to the bound presented by \citet{duchi2013local}, the bound presented by \citet[Theorem 5]{10206578} and the bound obtained by the contraction coefficient of relative entropy presented by \citet{asoodeh2024contraction} with an additional application of reverse Pinsker's inequality in \cite[Theorem 28]{7552457} (with $Q_{\min}=c$) to obtain a bound between $D(K\circ P_X||K\circ Q_X)$ and $\text{TV}(P_X||Q_X)$. Since $K_2$ does not satisfy any finite LDP guarantee, such a comparison is not possible in this case. The bound in Corollary \ref{corr:hellinger} is shown in Figure \ref{fig:corrbounds}(b) for $K_2$.
\end{example}

Finally, the following example shows how the obtained bounds can be applied in a minimax estimation setting.

\begin{example}[Private minimax risk]
\label{ex:bern_iid}
Let $X_1,\dots,X_n$ be i.i.d. with unknown parameter
$\theta\in\{\theta_0,\theta_1\}$, where $\theta_1-\theta_0=\delta\in(0,\nicefrac{1}{4})$ and $P_{\theta_0} = [\nicefrac{1}{4}-\delta,\nicefrac{1}{4}+\delta,\nicefrac{1}{4},\nicefrac{1}{4}]^\top$ and $P_{\theta_1}=[\nicefrac{1}{4}+\delta,\nicefrac{1}{4}-\delta,\nicefrac{1}{4},\nicefrac{1}{4}]^\top$.
Each sample is released through a mechanism
$K\in\mathcal M(\varepsilon,c)$, producing privatized i.i.d.~observations
$Y_i\sim K\circ X_i$. Applying Corollary~\ref{corr:relativeentropy},
    \begin{align}
    D(K\circ P_{\theta_0}\|K\circ P_{\theta_1})
    \leq
    \Xi(\varepsilon,c)\log\!\big((1-4c)e^\varepsilon+1\big)\,\delta .
\end{align}
By the tensorization property of relative entropy \cite{Polyanskiy_Wu_2025},
\begin{align}
D\big((K\circ P_{\theta_0})^{\otimes n}||&
(K\circ P_{\theta_1})^{\otimes n}\big)
\\&\leq
n\,\Xi(\varepsilon,c)\log\!\big((1-4c)e^\varepsilon+1\big)\,\delta .
\end{align}
Le Cam's two-point method together with the Bretagnolle-Huber inequality \cite{bretagnolle2006estimation} implies the minimax risk bound,
\begin{equation}
R^\star
\geq
\frac12\exp\!\Big(
- n\,\Xi(\varepsilon,c)\log\!\big((1-4c)e^\varepsilon+1\big)\,\delta
\Big).
\end{equation}
In particular, to achieve $R^\star\le 1/4$, we require up to $\log$-terms,
\begin{equation}
n
=
\Omega\!\left(
\frac{1}{\Xi(\varepsilon,c)\,\delta}
\right) .
\end{equation}
This shows that $(\varepsilon,c)$-PML mechanisms reduce the effective sample size
by a factor proportional to the private Dobrushin coefficient $\Xi(\varepsilon,c)$. If $\varepsilon\geq \log\nicefrac{1}{(2c)}$, we have $\Xi(\varepsilon,c)=1$, and the sample complexity is equal to the non-private setup.
\end{example}


\begin{figure}
    \centering
    \begin{subfigure}[b]{\linewidth}
    \centering
    \label{fig:corrbounds:subfig:corr1}
        \includegraphics[scale=0.75]{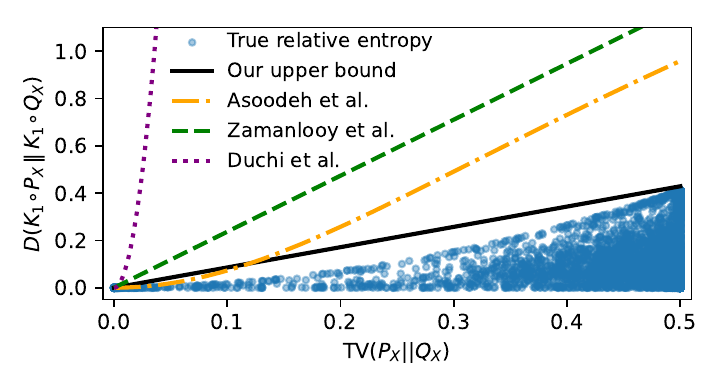}
        \caption{$K_1$, $N=10$, $c=0.05$.}
    \end{subfigure}
    \begin{subfigure}[b]{\linewidth}
    \centering
    \label{fig:corrbounds:subfig:corr2}
        \includegraphics[scale=0.75]{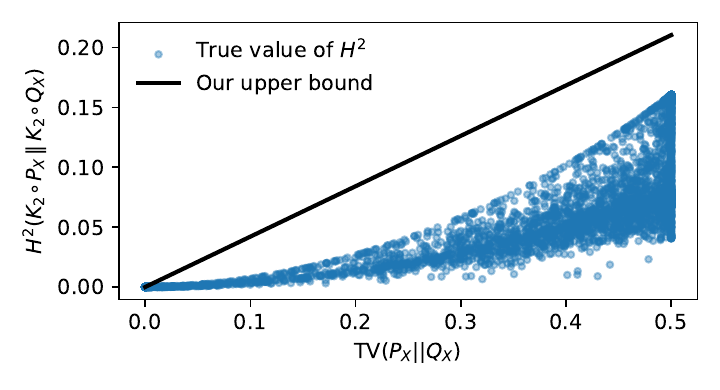}
        \caption{$K_2$, $N=5$, $c=0.1$.}
    \end{subfigure}
    \caption{Numerical evaluation of the bounds presented in Corollaries \ref{corr:relativeentropy} and \ref{corr:hellinger} for randomly generated distributions in the set $\mathcal Q_\mathcal X(c)$. $K_i$ for $i=1,2$ are given in Example \ref{ex:RR}. }
    \label{fig:corrbounds}
\end{figure}

\bibliographystyle{IEEEtranN}
\footnotesize
\balance
\bibliography{main}

\normalsize
\appendices
\onecolumn
\section{Proof of Lemma \ref{lem:privacyregion}}
The proof is based on the following observation about the disclosure of subset-memberships.
\label{app:privregionproof}
\begin{proposition}
\label{prop:disclosureprevgeneralization}
    Assume $X$ is distributed according to $P_X$. Let $\mathcal X_{\text{sub}} \subseteq \mathcal X$ and $P_X(\mathcal X_{\text{sub}}) = \sum_{x\in\mathcal X_{\text{sub}}}P_X(x)$. If $K$ satisfies $\varepsilon$-PML with $\varepsilon < -\log P_X(\mathcal X_{\text{sub}})$, then $K$ cannot deterministically reveal the fact that a realization $x \in \mathcal X_{\text{sub}}$. 
\end{proposition}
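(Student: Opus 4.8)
The plan is to prove the contrapositive: if $K$ \emph{does} deterministically reveal the event $\{X \in \mathcal X_{\text{sub}}\}$, then $\sup_{y \in \mathcal Y} \ell_{K \times P_X}(X \to y) \geq -\log P_X(\mathcal X_{\text{sub}})$, so that $K$ cannot satisfy $\varepsilon$-PML for any $\varepsilon < -\log P_X(\mathcal X_{\text{sub}})$. The first step is to make ``deterministically reveal'' precise. Since $P_X$ has full support (as required for the closed-form PML expression to be valid), $K$ deterministically reveals $\{X \in \mathcal X_{\text{sub}}\}$ exactly when there is an outcome $y \in \mathcal Y$ with $(K \circ P_X)(y) > 0$ such that $P_X(X \in \mathcal X_{\text{sub}} \mid Y = y) = 1$; this is equivalent to $\{x \in \mathcal X : K_{Y|X=x}(y) > 0\} \subseteq \mathcal X_{\text{sub}}$, i.e.\ the column of $K$ indexed by $y$ vanishes on $\mathcal X \setminus \mathcal X_{\text{sub}}$.

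Fixing such a $y$, the marginal $(K \circ P_X)(y)$ is a sum over $x \in \mathcal X_{\text{sub}}$ only, and bounding each conditional $K_{Y|X=x}(y)$ by $\max_{x' \in \mathcal X} K_{Y|X=x'}(y)$ gives $(K \circ P_X)(y) \leq \big(\max_{x' \in \mathcal X} K_{Y|X=x'}(y)\big)\, P_X(\mathcal X_{\text{sub}})$. Substituting this into the definition $\ell_{K \times P_X}(X \to y) = \log \frac{\max_{x' \in \mathcal X} K_{Y|X=x'}(y)}{(K \circ P_X)(y)}$ immediately yields $\ell_{K \times P_X}(X \to y) \geq -\log P_X(\mathcal X_{\text{sub}})$, which is exactly what the contrapositive requires. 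Note that $(K \circ P_X)(y) > 0$ forces $\max_{x'} K_{Y|X=x'}(y)$ to be positive and (by the support condition) attained within $\mathcal X_{\text{sub}}$, so the ratio is well defined.

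I do not expect a genuine obstacle here: the entire content is the translation of ``deterministic subset disclosure at outcome $y$'' into the support condition on the column of $K$ indexed by $y$, together with the one-line estimate above. If one prefers, the same conclusion can also be read off from the inferential characterization of PML in \cite{saeidian2023inferential}, but the direct computation is self-contained and is precisely the estimate invoked in the proof of Lemma~\ref{lem:privacyregion}.
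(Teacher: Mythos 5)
Your proof is correct and follows essentially the same route as the paper's: both argue by contrapositive, translate deterministic disclosure at an outcome $y$ into the vanishing of the column $K_{Y|X=\cdot}(y)$ outside $\mathcal X_{\text{sub}}$, and then lower-bound $\ell_{K\times P_X}(X\to y)$ by $-\log P_X(\mathcal X_{\text{sub}})$. Your direct estimate $(K\circ P_X)(y)\leq \big(\max_{x'\in\mathcal X}K_{Y|X=x'}(y)\big)P_X(\mathcal X_{\text{sub}})$ is exactly what the paper obtains by the slightly longer detour of introducing the renormalized distribution $Q_X$ on $\mathcal X_{\text{sub}}$ and invoking the nonnegativity of PML.
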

\begin{proof}
    We show the statement by contra-positive. Assume that $K$ deterministically reveals $x\in\mathcal X_{\text{sub}}$ for some outcomes $X=x \in \supp(P_X)$ and $Y=y \in \supp(P_Y)$. Fix this specific $y$. We have $K_{X|Y=y}(x') = 0$ for all $x' \notin \mathcal X_{\text{sub}}$. Hence, we also have
    \begin{equation}
        K_{Y|X=x'}(y) = \frac{K_{X|Y=y}(x')P_Y(y)}{P_X(x')} = 0 \quad \forall x'\notin \mathcal X_{\text{sub}}.
    \end{equation}
    Now, define the distribution $Q_X$ by
    \begin{equation}
        Q_X(x) = \begin{cases}
            \frac{P_X(x)}{P_X(\mathcal X_{\text{sub}})}, &\text{if } x \in \mathcal X_{\text{sub}} \\
            0, &\text{otherwise.}
        \end{cases}
    \end{equation}
    Then we have 
    \begin{align}
        \ell_{K\times P_X}(X \to y) &= \log \frac{\max_{x\in\mathcal X}K_{Y|X=x}(y)}{\sum_{x\in\mathcal X}K_{Y|X=x}(y)P_X(x)}\\[.5em]
        &= \log\frac{\max_{x\in\mathcal X_{\text{sub}}}K_{Y|X=x}(y)}{\sum_{x\in\mathcal X_{\text{sub}}}K_{Y|X=x}(y)P_X(x)}\\[.5em]
        &= \log \frac{\max_{x\in\mathcal X_{\text{sub}}}K_{Y|X=x}(y)}{P_X(\mathcal X_{\text{sub}})\sum_{x\in\mathcal X_{\text{sub}}}K_{Y|X=x}(y)Q_X(x)}\label{subeeq:xsubappears}\\[.5em]
        &= \log\frac{1}{P_X(\mathcal X_{\text{sub}})} + \mathcal \ell_{K\times Q_X}(X\to y) \\[.5em]
        &\geq -\log P_X(\mathcal X_{\text{sub}}) \label{subeq:elldissapears},
    \end{align}
    Where \eqref{subeeq:xsubappears} follows from the definition of $Q_X$ above, and \eqref{subeq:elldissapears} follows from the nonnegativity of PML \cite[Lemma 1]{saeidian2023pointwise}.
\end{proof}
  
Now, assume that $K$ has $l$ zero elements in one of its columns, say in the column corresponding to output symbol $y$. Let $\mathcal X_{\text{sub}} = \{x\in \mathcal X: K_{Y|X=x}(y) > 0\}$. Then the mechanism deterministically discloses subset-membership to $\mathcal X_{\text{sub}}$ via the outcome $y$. (Whenever an adversary observes the outcome $Y=y$, she can deterministically infer that the input realization $X=x$ was in $\mathcal X_{\text{sub}}$.) According to Proposition \ref{prop:disclosureprevgeneralization} in Appendix \ref{app:privregionproof}, this implies that $\ell_{K\times P_X}(X\to y) \geq -\log P_X(\mathcal X_\text{sub})$ for any distribution $P_X$. We further have,
    \begin{equation}
        \min_{P_X \in\mathcal Q_\mathcal X(x)} P_X(\mathcal X_\text{sub}) = (N-l)c.
    \end{equation}
    This implies that $C(K,\mathcal Q_\mathcal X(c)) \geq -\log\big((N-l)c\big)$. \qed 

\section{Proof of Lemma \ref{lem:QcdontchangeTVcontr}}
\label{app:TVeqonSimplexlemma}
        From the extreme-point representation of the set $\mathcal Q_\mathcal X(c)$,
        \begin{equation}
            \mathcal Q_\mathcal X(c) =\{c + V_X \mid V_X \in \mathcal P(\mathcal X)\},
        \end{equation}
        where $c + P_X$ is understood as adding scalar $c$ to every component of $V_X$. We have that 
        \begin{align}
            \eta^{\mathcal Q_\mathcal X(x)}_{\text{TV}}(K) &= \sup_{P_X,Q_X \in\mathcal Q_\mathcal X(c)} \frac{||K\circ P_X - K \circ Q_X ||_1}{||P_X-Q_X||_1} \\[.7em]
            &= \sup_{V_X,W_X \in \mathcal P(\mathcal X)}\frac{||K\circ (c + (1-Nc)V_X) - K \circ (c + (1-Nc)W_X) ||_1}{||c + (1-Nc)V_X-c - (1-Nc)W_X||_1}.
        \end{align}
        Now, note that, 
        \begin{equation}
            ||c + (1-Nc)V_X-c - (1-Nc)W_X||_1 = (1-Nc)||V_X-W_X||_1, 
        \end{equation}
        and,
        \begin{equation}
            ||K\circ (c + (1-Nc)V_X) - K \circ (c + (1-Nc)W_X) ||_1 = (1-Nc)||K \circ V_X - K \circ W_X||_1.
        \end{equation}
        Therefore, we may write,
        \begin{align}
            \eta^{ \mathcal Q_\mathcal X(c)}_{\text{TV}}(K) &= \sup_{V_X,W_X \in \mathcal P(\mathcal X)} \frac{(1-Nc)||K\circ V_X-K \circ W_X||_1}{(1-Nc)||V_X - W_X||_\text{TV}} \\
            &= \sup_{V_X,W_X \in \mathcal P(\mathcal X)} \frac{||K\circ V_X-K \circ W_X||_1}{||V_X - W_X||_\text{TV}} = \eta_{\text{TV}}(K).
        \end{align}
        Hence, the restriction from $\mathcal P(\mathcal X)$ to $\mathcal Q_\mathcal X(c)$ does not affect $\eta_\text{TV}$. \qed

\section{Proof of Lemma \ref{lem:cardinality}}
\label{app:proofcardinality}
   Assume $K \in \mathcal M(\varepsilon,c)$ induces output alphabet $\mathcal Y$ with cardinality $|\mathcal Y|>2$. Fix two arbitrary distributions $P_X,Q_X \in \mathcal Q_\mathcal X(c)$ and pick $E^* \subseteq \mathcal Y$ such that $\text{TV}(P_Y||Q_Y) = P_Y(E^*)-Q_Y(E^*)$. Define the random variable $Z$ on the binary alphabet $\mathcal Z\coloneqq \{0,1\}$ as being induced by the kernel $K'$ defined as
   \begin{equation}
       K'_{Z|Y=y}(z) = \begin{cases}
           1, &\text{if }z=0 \text{ and }y\in E^* \text{ or }z=1\text{ and }y \in (E^*)^c, \\
           0, &\text{ otherwise.}
       \end{cases}
   \end{equation}
   With this, the Markov chain $X-Y-Z$ holds (for both $P_X$ and $Q_X$). Let $K^{\text{BIN}}$ be the kernel mapping from $\mathcal X$ to $\mathcal Z$ by the successive application of $K$ and $K'$, that is, let $K^{\text{BIN}}=K'\circ K$. Clearly, $K^\text{BIN}$ has binary output support and by the post-processing property of PML \cite[Lemma 1]{saeidian2023pointwise}, we have,
   \begin{equation}
       \max_{z\in\{0,1\}}\ell(X\to z) \leq \max_{y\in\mathcal Y}\ell (X \to y),
   \end{equation}
   for any input distribution of $X$. Therefore $K^\text{BIN} \in \mathcal M(\varepsilon,c)$. Additionally, let $P_Z = K^\text{BIN} \circ P_X$, $Q_Z = K^\text{BIN}\circ Q_X$. We have
   \begin{align}
       \text{TV}(P_Z||Q_Z) &= \max_{z\in\{0,1\}}(P_Z(z)-Q_Z(z)) \\[.5em]
       &=\max\Big\{P_Y(E^*)-Q_Y(E^*),P_Y\big((E^*)^c\big)-Q_Y\big((E^*)^c\big)\Big\} \\[.5em]
       &=P_Y(E^*) - Q_Y(E^*) \\[.5em]
       &=\text{TV}(P_Y||Q_Y).
   \end{align}
   Since this applies to arbitrary input distributions $P_X$, $Q_X$, it also applies to the maximizing distributions in 
   \begin{equation}
       \eta_\text{TV}(K) = \sup_{P_X\neq Q_X} \frac{\text{TV}(K\circ P_X||K\circ Q_X)}{\text{TV}(P_X||Q_X)},
   \end{equation}
   showing that with the corresponding choice of $K'$, $\eta_\text{TV}(K) = \eta_\text{TV}(K^\text{BIN})$. Since $K \in \mathcal M(\varepsilon,c)$ implies that $K^\text{BIN}\in \mathcal M(\varepsilon,c)$, this proves the claim.
    \qed

\section{$(\varepsilon,c)$-PML as Linear Inequality Constraints}
\label{app:polytope}
\begin{lemma}
\label{lem:setcharacterization}
    For any $\varepsilon\geq 0$ and any $c \in (0,\nicefrac{1}{N}]$, $K \in \mathcal M(\varepsilon,c)$ implies that for all $y\in\mathcal Y$,
    \begin{align}
        K_{Y|X=x}(y) \leq e^\varepsilon\bigg(c\sum_{x'\in\mathcal X}K_{Y|X=x'}(y) + (1-Nc)K_{Y|X=x''}(y)\bigg), \quad \forall \, x,x'' \in \mathcal X.
    \end{align}
\end{lemma}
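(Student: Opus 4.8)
The plan is to unpack the definition of $(\varepsilon,c)$-PML and then evaluate it against a well-chosen family of extreme points of the polytope $\mathcal Q_\mathcal X(c)$. By Definition~\ref{def:epscPML}, $K\in\mathcal M(\varepsilon,c)$ means $C(K,\mathcal Q_\mathcal X(c))\leq\varepsilon$, i.e., $\ell_{K\times P_X}(X\to y)\leq\varepsilon$ for every $P_X\in\mathcal Q_\mathcal X(c)$ and every $y\in\mathcal Y$. Since every $P_X\in\mathcal Q_\mathcal X(c)$ has full support, the closed form of PML applies, so this is equivalent to
\begin{equation}
    \max_{x\in\mathcal X}K_{Y|X=x}(y)\leq e^\varepsilon\,(K\circ P_X)(y)\qquad\text{for all }P_X\in\mathcal Q_\mathcal X(c),\ y\in\mathcal Y.
\end{equation}

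Next I would fix $y\in\mathcal Y$ and $x''\in\mathcal X$ and specialize the above to the prior $P_X^{(x'')}$ that assigns probability $c$ to every symbol different from $x''$ and probability $1-(N-1)c$ to $x''$. Because $c\leq\nicefrac{1}{N}$ we have $1-(N-1)c\geq c\geq0$, so $P_X^{(x'')}$ is a valid distribution lying in $\mathcal Q_\mathcal X(c)$. A direct computation gives
\begin{equation}
    (K\circ P_X^{(x'')})(y)=c\sum_{x'\in\mathcal X}K_{Y|X=x'}(y)+(1-Nc)\,K_{Y|X=x''}(y),
\end{equation}
and combining this with the displayed inequality and the trivial bound $K_{Y|X=x}(y)\leq\max_{x\in\mathcal X}K_{Y|X=x}(y)$, valid for every $x\in\mathcal X$, yields precisely the claimed inequality for all $x,x''\in\mathcal X$ and all $y\in\mathcal Y$.

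There is no real obstacle here; the only points requiring care are the verification that $P_X^{(x'')}\in\mathcal Q_\mathcal X(c)$ (which is exactly where the hypothesis $c\leq\nicefrac{1}{N}$ enters) and the elementary simplification of the pushforward $(K\circ P_X^{(x'')})(y)$. I would additionally record, although it is not needed for the stated implication, that the converse also holds: since $(K\circ P_X)(y)$ is linear in $P_X$, its minimum over the polytope $\mathcal Q_\mathcal X(c)$ is attained at a vertex of the form $P_X^{(x'')}$, so the linear inequalities of the lemma (quantified over all $x,x''$) are in fact equivalent to $K\in\mathcal M(\varepsilon,c)$. This equivalence is what lets the remainder of the proof of Theorem~\ref{thm:M1bound} treat $\mathcal M(\varepsilon,c)$ as an explicit polyhedron.
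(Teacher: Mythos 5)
Your proof is correct and follows essentially the same route as the paper: the paper parametrizes $\mathcal Q_\mathcal X(c)$ as $c+(1-Nc)W_X$ and takes the infimum over $W_X$ (attained at point masses), which is exactly your evaluation at the vertex priors $P_X^{(x'')}$, followed by the same relaxation of the max/min over $\mathcal X$. Your added remark that the inequalities, quantified over all $x,x''$, are in fact equivalent to membership in $\mathcal M(\varepsilon,c)$ is consistent with the paper's chain of equivalences and with how the lemma is used later.
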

\begin{proof}
     Assume that $K \in \mathcal M(\varepsilon,\mathcal Q_\mathcal X(c))$. Then we have that $C(K,\mathcal Q_\mathcal X(c)) \leq \varepsilon$, that is, 
    \begin{align}
        &\quad \quad \quad \sup_{P_X\in\mathcal Q_\mathcal X(c)} \max_{y\in\mathcal Y} \frac{\max_{x\in\mathcal X}K_{Y|X=x}(y)}{\sum_{x'\in\mathcal X}P_X(x')K_{Y|X=x'}(y)} &\leq e^\varepsilon \\[1em]
        &\iff \sup_{W_X \in\mathcal P(\mathcal X)} \max_{y\in\mathcal Y} \frac{\max_{x\in\mathcal X}K_{Y|X=x}(y)}{\sum_{x'}(c+(1-Nc)W_X(x))K_{Y|X=x'}(y)} &\leq e^\varepsilon \\[1em]
        &\iff \max_y \frac{\max_x K_{Y|X=x}(y)}{c\sum_{x'} K_{Y|X=x'}(y) + (1-Nc)\inf_{W_X \in\mathcal P(\mathcal X)}\sum_{x''}W_X(x'')K_{Y|X=x''}(y)} &\leq e^\varepsilon \\[1em]
        &\iff \max_{y} \frac{\max_x K_{Y|X=x}(y)}{c\sum_{x'} K_{Y|X=x'}(y) + (1-Nc)\min_{x''}K_{Y|X=x''}(y)} &\leq e^\varepsilon \\[1em]
        &\iff \, \forall y\in\mathcal Y:\,\, \frac{\max_{x\in\mathcal X}K_{Y|X=x}}{c\sum_{x'}K_{Y|X=x}(y) + (1-Nc)\min_{x''}K_{Y|X=x''}(y)} &\leq e^\varepsilon 
    \end{align}
    This implies the statement of the lemma by rearranging and relaxing the minimum and maximum on $\mathcal X$. 
\end{proof}
\section{Proof of Proposition \ref{prop:Gammabound}}
\label{app:proofGammabound}
We have that 
\begin{equation}
    \Gamma_{\max} \leq\max_{y\in\mathcal Y} \frac{\sup_{P_X\in\mathcal Q_{\mathcal X}(x)}\sum_{x\in\mathcal X}K_{Y|X=x}(y)P_X(x)}{\inf_{Q_X\in\mathcal Q_{\mathcal X}(x)}\sum_{x\in\mathcal X}K_{Y|X=x}(y)Q_X(x)}.
\end{equation}
From the structure of $\mathcal Q_\mathcal X(c)$ we know that,
\begin{equation}
    \sup_{P_X\in\mathcal Q_{\mathcal X}(x)}\sum_{x\in\mathcal X}K_{Y|X=x}(y)P_X(x) = c\sum_{x\in\mathcal X}K_{Y|X=x}(y) + (1-Nc)\max_{x\in\mathcal X}K_{Y|X=x}(y)
\end{equation}
and 
\begin{equation}
    \inf_{Q_X\in\mathcal Q_{\mathcal X}(x)}\sum_{x\in\mathcal X}K_{Y|X=x}(y)Q_X(x) = c\sum_{x\in\mathcal X}K_{Y|X=x}(y) + (1-Nc)\min_{x\in\mathcal X}K_{Y|X=x}(y).
\end{equation}
Further, Lemma \ref{lem:setcharacterization} provides a bound on $\max_x K_{Y|X=x}(y)$, and hence we can bound,
\begin{equation}
    \sup_{P_X\in\mathcal Q_{\mathcal X}(x)}\sum_{x\in\mathcal X}K_{Y|X=x}(y)P_X(x) \leq c\sum_{x\in\mathcal X}K_{Y|X=x}(y) + (1-Nc) e^\varepsilon\bigg(c\sum_{x\in\mathcal X}K_{Y|X=x}(y)+(1-Nc)\min_{x\in\mathcal X}K_{Y|X=x}(y)\bigg).
\end{equation}
Therefore, we can bound $\Gamma_{\max}$ by 
\begin{equation}
    \Gamma_{\max} \leq (1-Nc)e^\varepsilon + \frac{c \sum_{x\in\mathcal X}K_{Y|X=x}(y)}{c \sum_{x\in X}K_{Y|X=x}(y)+(1-Nc)\min_{x\in\mathcal X}K_{Y|X=x}(y)}.
\end{equation}
The desired bound then follows by noticing that $\min_{x}K_{Y|X=x}(y) \geq 0$. The bound on $\Gamma_{\min}$ follows by the same steps.\qed

\end{document}